\newcommand{\Lang}{HASL}
\newcommand{\haddad}{linear hybrid }
\newcommand{\cosmos}{\mbox{\textup{C}\scalebox{0.75}{{\textsc{OSMOS}}}}}
\newcommand{\uppaal}{\mbox{\textup{U}\scalebox{0.75}{{\textsc{PPAAL}}}}}
\newcommand{\uppaalsmc}{\mbox{\textup{U}\scalebox{0.75}{{\textsc{PPAAL-SMC}}}}}
\newcommand{\plasma}{\mbox{\textup{P}\scalebox{0.75}{{\textsc{LASMA}}}}}
\newcommand{\prism}{\mbox{\textup{P}\scalebox{0.75}{{\textsc{RISM}}}}}
\newcommand{\apmc}{\mbox{\textup{A}\scalebox{0.75}{{\textsc{PMC}}}}}
\newcommand{\ymer}{\mbox{\textup{Y}\scalebox{0.75}{{\textsc{MER}}}}}
\newcommand{\mrmc}{\mbox{\textup{M}\scalebox{0.75}{{\textsc{RMC}}}}}
\newcommand{\vesta}{\mbox{\textup{V}\scalebox{0.75}{{\textsc{ESTA}}}}}
\newcommand{\const}{\mathsf{Const}}
\newcommand{\lconst}{\mathsf{lConst}}
\newcommand{\flow}{\textit{flow}}
\newcommand{\updates}{\mathsf{Up}}
\newcommand{\false}{\mbox{\tt false}}
\newcommand{\aconst}{\alpha}
\newcommand{\IN}{\!\in\!}
\newcommand{\TIMES}{\!\times\!}
\newcommand{\ignore}[1]{}
\title{Analysing oscillatory trends of discrete-state stochastic processes through HASL statistical model checking}
\author{Paolo Ballarini\inst{}}
\institute{%
Ecole Centrale Paris, France\\
\email{paolo.ballarini@ecp.fr}
}
\date{}
\begin{document}

\maketitle

\begin{abstract}
The application of formal methods to the analysis of stochastic oscillators  has
been at the focus of several research works in recent times. In this paper we
provide insights on the application of an expressive temporal logic formalism,
namely the Hybrid Automata Stochastic Logic (HASL), to that issue. 
We show how one can take advantage of  the expressive power of the HASL logic to
define and assess relevant  characteristics 
of (stochastic) oscillators. 
\end{abstract}

\section{Introduction}
\label{sec:intro}

Oscillations are a relevant type of dynamics which characterises the behaviour of several types of system in different domains, notably 
in the field of  biological modelling. 

The analysis of oscillations is a well established subject in applied mathematics for which different approaches 
exist. For example for systems described in terms of Ordinary Differential Equations (ODEs)  limit-cycle analysis can be used to assess 
oscillation characteristics (e.g. period and amplitude of oscillations). On the other hand  
signal processing methods such as, for example, Fast  Fourier Transformation (FFT)  or autocorrelation analysis 
can be used to extract the oscillatory characteristic of   a given signal, i.e. a sequence of points resulting from the observed system (being  
it the actual system under investigation or a model representing it). 

In recent times the study of  oscillatory systems has attracted the attention 
of research in the area discrete-state stochastic models  (in the remainder we will  refer to these kind of oscillatory models simply as stochastic oscillators)  yielding to a number of research works aimed at the application of temporal logic reasoning to characterisation of oscillations~\cite{Ballarini20102019,Spieler13,DBLP:journals/tcsb/AndreiC12,DBLP:conf/isola/DavidLLMPS12}.
The goal in that respect is, quite simply, to adapt (stochastic) model checking techniques so that, given a model $M$, 
one is capable to obtain answers to questions such as: \emph{``does $M$ oscillates?''} , \emph{``where the peaks of  oscillations are located  ?''} 
\emph{``what is the (average) period of oscillations?''}. 
Since here we refer to stochastic models  answering  such questions usually boils down to assessing some  distribution of probability 
(e.g. assessing the steady-state distribution of   $M$, and/or the PDF of the period duration, and/or the PDF of the location of the peaks 
of oscillations). 
So far analysis of oscillations  through stochastic model checking   have been mainly obtained through application 
of the Continuous Stochastic Logic (CSL)~\cite{BHHK03}  (in some cases joint to its reward-based extensions~\cite{KNP07a}) approach 
or similarly expressive (linear-time) variants (e.g. Metric Interval Temporal Logic~\cite{DBLP:conf/isola/DavidLLMPS12}). 
Interestingly, in recent times, Spieler~\cite{Spieler13} has shown how  \emph{qualitative}, such as  \emph{``does a model $M$  oscillates sustainably?"}, 
as well as \emph{quantitative}, such as, for example, \emph{ ``what is the period  of oscillation of a sustained  oscillator $M$?"},  queries 
 can be formally assessed (in CSL form) 
by coupling a continuous-time Markov chain (CTMC) model with a timed automaton (TA) ``monitor" 
capable of  identifying noisy-periodic traces.

In this paper we extends Spieler's approach by considering   a recently introduced 
formalism, i.e.   the Hybrid Automata Stochastic Logic (HASL)~\cite{BDD+11}, as 
a means for   studying  stochastic oscillators.  

\paragraph{Paper contribution.} 
The paper main contribution is one of demonstrating the effectiveness of the HASL formalism as 
a means to effectively  specify and automatically estimate oscillation related measures. 
We consider two different approaches: the first concerned with 
assessing the oscillation period, the second  concerned with measuring  the oscillation 
peaks (hence  the oscillation amplitude). 
We define  two specific  types of  linear hybrid automata (LHA)  
that, when synchronised with an oscillatory stochastic process, 
are  capable of detecting the periods, respectively the peaks, of its trajectories 
and compute \emph{on-the-fly} classical characteristics like the average duration 
or the average amplitude of the oscillations as well as more sophisticated ones 
like the period fluctuation, which allow for assessing the \emph{regularity} 
of an oscillator.

We demonstrate such contributions by considering a well-known case study. i.e. the analysis of 
a model of the circadian clock~\cite{VKBL02}.

\paragraph{Paper organisation.} In Section~\ref{sec:hasl} we introduce the 
HASL formalism. In Section~\ref{sec:oscillationsHASL} we describe the basic contribution of the paper, namely 
the application of HASL to the analysis of oscillations. In Section~\ref{sec:circlock} we demonstrate the 
HASL-based analysis of oscillations on an example of  biological oscillator. 
We wrap up the paper with some concluding remarks in Section~\ref{sec:conc}.

\section{HASL  model checking}
\label{sec:hasl}

HASL framework belongs to the family of so-called statistical model checking methods,    whose goal is  
to produce estimates of a (formally specified) target measure  through  
sampling  of  the  trajectories 
of the model\footnote{as opposed to  numerical stochastic model checking which requires the complete construction of a model's state-space to assess the exact value of the target measure.}.
HASL is an automata-based type of logic, meaning that it employs  automata, and specifically  linear hybrid automata (LHA) as 
machineries for characterising the properties to be investigated.  
This yields the  main feature of HASL, that is,  its expressive power  which in this paper we 
are going to demonstrate in respect to the oscillation analysis problem. 

Simply speaking the HASL model checking procedure works as follows: given a model ${\cal D}$ and a certain dynamics of interest encoded by an LHA ${\cal A}$, 
the HASL  model checker  
 samples trajectories of the synchronised process ${\cal D}\!\times\! {\cal A}$, hence selecting 
only those paths of ${\cal  D}$ that are \emph{accepted} by ${\cal A}$ and using them for 
estimating the confidence-interval of a given target measure (in the following denoted $Z$), 
a quantity defined as a function of the LHA variables. 

In the following we recall the basics formal elements for HASL:  the characterisation of Discrete Event Stochastic Process (DESP)
i.e the pertaining class of models, the characterisation of LHA and of the corresponding synchronised process 
${\cal D}\!\times\!{\cal A}$, and that of target measure $Z$. For more details we refer the reader to~\cite{BDD+11}.

\subsection{Discrete Event Stochastic Processes}


We refer to a DESP  as a discrete-state stochastic process  consisting of an  
enumerable  set of states 
and whose dynamic is  triggered by a  set of (time consuming) discrete events. 
We do not consider any restriction on the nature 
of the distribution associated with events\footnote{hence, in essence, a DESP corresponds to a generalised semi-Markov processes~\cite{Gly83,ACD91}}. 
Otherwise said a DESP is a family of random variables $\{X(t)\mid t\in\mathbb{R}_{\geq 0}\}$ representing time and where, in the context of this paper,  
$\mathbb{N}^n$ is assumed to be the support of $X$ (i.e. we talk in this case of an $n$-dimensional DESP population model, 
see Definition~\ref{def:populationmodel}). 
Below we formally define the components a DESP consists of. Such characterisation is 
useful to provide an algorithmic formulation of the dynamics of a DESP (see below)  which is at the basis 
of the HASL statistical model checking procedure. 
\paragraph{Notation} For $A$ a generic set we denote 
$dist(A)$  the set of possible probability distributions whose support is $A$, that is, 
$dist(A)=\{\mu:\Sigma_A\to\mathds{R}^+ | (A,\Sigma_A,\mu)\}$ where  
$(A,\Sigma_A,\mu)$ is a probability space. Observe that 
depending on the nature of $A$  the 
corresponding probability distributions $\mu\in dist(A)$ are either \emph{continuous} (if $A$ is dense) 
or \emph{discrete} (if $A$ is finite/discrete). 

\begin{definition} [DESP]
A DESP is a tuple\\ 
${\cal D}=\langle S, \pi_0 , E, Ind, enabled, delay, choice, target\rangle$ where
\begin{itemize}\itemsep.1pt
	\item $S$ is an enumerable (possibly infinite) set of states, 
	\item $\pi_0 \in dist(S)$ is the initial distribution on states,
	\item $E$ is a finite set of events, 
	\item $Ind$ is a set of functions from $S$ to  ${\mathds R}$ called state indicators
        (including the constant functions),
	\item $enabled\!: S \rightarrow 2^E$ are the enabled events in each state with for all $s \in S$, $enabled(s)\neq \emptyset$.
	\item
	 $delay\!: S \times E \rightarrow dist(\mathds{R}^+)$ is a partial function
        defined for pairs $(s,e)$ such that $s \in S$ and $e \in enabled(s)$.
        \item 
        $choice\!: S \times 2^E \times \mathds{R}^+\rightarrow dist(E) $ is a partial  function  defined for tuples $(s,E',d)$ such that $E' \subseteq enabled(s)$
        and such that the possible outcomes of the corresponding distribution are restricted to $e \in E'$.
	\item $target\!: S \!\times\! E \!\times \mathds{R}^+ \!\to\!S$ is a partial function describing state changes through events
        defined for tuples $(s,e,d)$ such that $e \IN enabled(s)$.
\end{itemize}
\label{def:desp}
\end{definition}

A  \emph{configuration} of a DESP consists of a triple $(s\mathbin{,}\tau\mathbin{,} sched)$ with $s$ being the current state, $\tau\IN{\mathds R}^+$ the current time  and $sched : E \rightarrow {\mathds R}^+ \cup \{+\infty\}$ being the function that describes the occurrence time of each scheduled event  
($+\infty$ if an event is not yet scheduled).
Observe that a scheduler $sched$ essentially describes the state of  events' queue in a given configuration 
of the DESP, thus all (currently) enabled events will have a finite scheduled time $sched(e)\!<\!\infty$, whereas 
non-enabled events will be associated with an infinite delay, that is,  $sched(e)\!=\!\infty$. 
Thus, within the algorithm for generating a trajectory of a DESP,  a scheduler $sched$ 
provides the occurrence time of   the  next event to occur (see also Algorithm~\ref{alg_desp}). 
In the remainder we denote $Conf\!=\! S\TIMES \mathds{R}^+\TIMES Sched$ the set of possible configurations of a DESP (where $Sched$ denotes the set of possible schedules functions for the events of the DESP). 
Also for a configuration  $c\!=\!(s\mathbin{,}\tau\mathbin{,} sched)\IN Conf$, we denote $c(s)$, $c(\tau)$ and $c(sched)$  the 
state $s$, respectively the time $\tau$ and the schedule $sched$ of  configuration $c$. 

For a state $s$, $enabled(s)$ is the set of events enabled in $s$. 
For  $e \IN enabled(s)$, $delay(s,e)$ 
is the distribution of the delay between the enabling of $e$ and its possible occurrence. 
Furthermore, if we denote  $\delta_{m}$ the  delay of the earliest event in the current configuration 
 $(s\mathbin{,}\tau\mathbin{,} sched)$  of the 
process, and $E_{min} \!\subseteq\! enabled(s)$ 
the set of events with earliest delay, then $choice(s,E_{min},\delta_m)$ describes how
the conflict between the concurrent events in $E_{min}$ is randomly resolved: i.e.  
$choice(s,E_{min},\delta_m)(e')$ is the probability that $e'\IN E_{min}$ will be selected
hence occurring with delay  $\delta_{m}$. Finally function
$target(s,e,d)$ denotes the target state reached from $s$ on
occurrence of $e$ after waiting for  $d$ time units.

\paragraph{Dynamics of a DESP.}
The evolution of  a DESP   ${\cal D}$ 
can be informally summarised by an iterative procedure 
consisting of the following steps (assuming  $(s\mathbin{,}\tau\mathbin{,}sched)$  is the current configuration  of ${\cal D}$): 
1) determine the set $E_{min}$ of events enabled in state $s$ and  with minimal delay $\delta_m$; 
2)   select the next event  to occur  $e_{next}\IN E_{min}$ by  resolving  
conflicts (if any) between concurrent  events through  probabilistic 
choice according to $choice(s,E_{min},\tau)$; 3) determine  the new configuration of 
the process resulting from the occurrence of $e_{next}$, this in turns consists of 
three sub-steps: 3a)  determine the new state 
resulting from occurrence of $e_{next}$, i.e. $s'=target(s,e_{next},\delta_m)$; 
3b) update the current time to account for the delay of occurrence of $e_{next}$, i.e. $\tau=\tau+\delta_m$; 
3c) update the schedule of events according to the newly entered state $s'$ (this implies 
setting the schedule of no longer enabled events to $+\infty$ as well as determining 
the schedule of newly enabled events by sampling through the corresponding 
distribution).
  Such procedure is (semi-formally) summarised in   Algorithm~\ref{alg_desp}. 
\begin{algorithm}                      
\caption{Evolution of a DESP}          
\label{alg_desp}                           
\begin{algorithmic}                    
    \STATE initial\_configuration: $(s\mathbin{,}\tau\mathbin{,}sched)$
       \WHILE{$Enabled(s)\neq\emptyset$}
       \STATE  $E_{min}\!=\!min(\cup_{e\in enabled(s)} sched(e))$ 
       \STATE  $e_{next} = choice(s,E_{min},\tau)$
       \STATE $\delta_m = sched(e_{next})$
        \STATE  $s' = target(s,e_{next},\delta_m)$
        \STATE  $\tau' = \tau + \delta_m $
         \STATE   $sched(e) \!= +\infty$ \ ($\forall e\!\not\in\! enabled(s'))$
          \STATE   $sched(e) \!=\! sample(delay(s',e))$\  ($\forall e\!\in\! enabled(s'))$
    \ENDWHILE
\end{algorithmic}
\end{algorithm}
 
\noindent
A path (or trajectory) of a DESP is a sequence of configurations  $\sigma\!=\! c_1,c_2,c_3,\ldots$ 
resulting from the execution of the procedure highlighted by Algorithm~\ref{alg_desp}.  
We formalise this in the following definition. 
The notion of DESP path  will be used  later on for reasoning about the dynamics of a DESP and in particular for reasoning about oscillations. 
%
 \begin{definition}[Path of a DESP]
For a   DESP ${\cal D}=\langle S, \pi_0 , E, Ind, enabled, delay, choice, target\rangle$ with $Conf$ the set of its configurations 
we define the set of finite paths as $Path^*\subseteq\bigcup_{n\IN\mathds{N}} Conf^n$. 
We denote 
$\sigma\!=\!(c_0,c_1\ldots, c_n)\IN Path^*$, 
where $\pi_0(c_0(s))\!>0$ and $\forall 0\leq i< n$, 
$\exists e\IN E_{min}(c_i(s))$ such that $c_{i+1}(s)=target(c_i(s),e,c_i(\tau))$. 
By extension we denote $Path^{\omega}$ as the set of infinite path and 
$Path\!=\! Path^*\!\cup\! Path^\omega$ as the of all paths of a DESP. 
\end{definition}
\noindent 
In the remainder we might refer to  a DESP path using $\sigma\!=\!(c_0,c_1\ldots, c_n)$ 
or, depending on the context,  simply indicating the corresponding sequence of  states  $\sigma\!=\!(c_0(s),c_1(s)\ldots, c_n(s))$, 
or simply $\sigma\!=\!(s_0,s_1\ldots, s_n)$. 
Furthermore  for  a path $\sigma\!=\!(c_0,c_1\ldots, c_n)$ 
we  use the following notations: 
for $i\!\in\!\mathds{N}$, $\sigma[i]=c_i(s)$ denotes the $i$-th state, while for $t\!\in\!\mathds{R}^+$,    
$\sigma @ t$ denotes the state in which $\sigma$ is at time $t$, 
that  is, $\sigma @ t\!=\! \sigma[i]$ such that $i$ is the smallest $i$ with 
$t\!\leq\! c_i(\tau)$

Since in this paper we deal with the analysis of discrete-state biological models representing the 
evolution of the molecular population of $n$ species, 
we introduce the notion of DESP population model. 
 \begin{definition}[DESP Population Model]
 \label{def:populationmodel}
A DESP model for $n\IN\mathds{N}$ population types is a DESP  
${\cal D}=\langle S, \pi_0 , E, Ind, enabled, delay, choice, target\rangle$ with $S\!\subseteq\!\mathds{N}^n$.
\end{definition}

 \begin{definition}[DESP Observed Species]
For ${\cal D}$ a DESP population model  with $n$ species we define ${\cal D}_i$ the 
observed $i^{th}$ process, with $1\!\leq\! i\!\leq\! n$,   as the process resulting from 
${\cal D}$  by observing only the $i^{th}$ component of each state of ${\cal D}$. 
Thus each $s\!=\! (s_1,\ldots, s_i,\ldots s_n)\IN S$ of ${\cal D}$ corresponds to state 
$s_i\IN S_i$ of ${\cal D}_i$. 
\end{definition}
\noindent
By extension for $\sigma\IN Path$ a path of a DESP population model we denote $\sigma_i$ 
the   $i^{th}$ projection of $\sigma$, thus if 
$\sigma\!=\! (s^1_0,\ldots s^n_0), (s^1_1\ldots s^n_1), \ldots $ then 
$\sigma_i\!=\! s^i_0 ,s^i_1\ldots  $



\paragraph{Indicator functions.} 
In the definition of  DESP  we include a set of \emph{indicator functions} denoted $Ind$. 
An indicator  $\alpha\!\in\! Ind$ maps  states of a DESP to real values $\alpha:S\to{\mathds R}$. 
DESP indicators describe what information can be seen 
by an LHA during the  synchronisation  with a DESP. 
Specifically, indicators  appear in  various parts of a synchronising LHA (see Definition~\ref{def:dta}): 
in the \emph{location invariants}  (function $\Lambda$), in a location's \emph{flow}, and 
in the  \emph{edge constraints} ($\const$ and $\lconst$, within $\to$) 
 and \emph{edge updates} ($\updates$) of an LHA edge. 
We denote $Prop\!\subseteq\! Ind$  the subset of boolean valued  indicators called \emph{propositions}, i.e., 
for $\alpha^*\!\in\! Prop$, $\alpha^*:S\to\{0,1\}$. 
Indicators are evaluated against states. Thus for $s\!\in\! S$, and $\alpha\IN Ind$ an indicator, $\alpha(s)$ 
denotes the value of $\alpha_i$ in state $s$. 
Specific details about how   indicators are applied within the characterisation of an LHA 
are given in Section~\ref{sec:HASL}.



\paragraph{DESP in terms of GSPN}
For implementation convenience, in the context of HASL and in particular of the associated model checking tool \cosmos~\cite{cosmos,BDDHP-qest11}, 
we represent DESP models in terms of stochastic petri nets, and more precisely we adopt (the non-markovian extension\footnote{GSPN with timed transitions 
associated to generic probability distributions, that is, not necessarily Negative Exponential as with the standard GSPN definition~\cite{ABC+95}.} 
of) Generalised Stochastic Petri Net (GSPN)~\cite{ABC+95} as the high-level input formalism 
for expressing a DESP model. Thus, in this context,  DESP indicators 
are actually GSPN indicators, that is: 
they are expressions which contain  references to the (marking of the) places of a GSPN model. 
For the sake of brevity here we assume familiarity with the GSPN formalism, 
referring the reader to the literature~\cite{ABC+95} for details. 
GSPN semantics is briefly presented later on through description of a simple 
GSPN model (see Figure~\ref{fig:lha2}). 

\paragraph{Example:  DESP indicators within  LHA.} 
In the LHA of Figure~\ref{fig:lha2} (right) the indicator {\fontfamily{phv}\selectfont protA}, 
which refers to the marking of the GSPN  place named {\fontfamily{phv}\selectfont protA} in 
Figure~\ref{fig:lha2} (left),  is used within the updates of the self-loop edges of location $l_0$. 
Specifically  indicator {\fontfamily{phv}\selectfont protA} 
is used to update the LHA variable  $a$  with the  current number of tokens contained in GSPN place 
{\fontfamily{phv}\selectfont protA}. 
Similarly in the LHA  of Figure~\ref{fig:lhaoscill} the GSPN place indicator $A$ 
(which refers to the GSPN place named $A$ of the GSPN in Figure~\ref{fig:circClock}) 
is used within the invariant constraints $A\!\leq\! L$, $L\!\leq\! A\!\leq\! H$ and $A\!\geq\! H$ 
associated respectively with locations $low$, $mid$ and $high$ (where $L,H\IN\mathds{R}$ are just symbolic names for   
two real-valued constants used for representing a generic version of the ${\cal A}_{per}$ LHA: 
in practice concrete instances of ${\cal A}_{per}$ are obtained by actual instances of $L,H$, e.g., $L\!=\!1$ and $H\!=\! 10$). Such invariants essentially state  that entering the locations $low$, $mid$ and $high$ 
depend on the current marking of place $A$ (see Section~\ref{sec:periodHASL} for more details). 
\subsection{Hybrid Automata Stochastic Logic}
\label{sec:HASL}

\ignore{
The use of statistical methods instead of numerical ones gives us the
possibility to relieve the limitations in terms of model and
properties that were inherent to numerical methods. When numerical
model checking was focusing on markovian models, statistical methods
permit to use a very wide range of distributions, and to synchronise
such a model with a non decidable class of automata with linearly
evoluting variables, complex updates and guards. We also consider a
more complex evaluation of systems. We are no more limited to the
probability with which a property is satisfied, we can also compute
the expected value of performability  parameters such as waiting time,
number of clients in a system, production cost of an item.
}

The  Hybrid Automata Stochastic Logic, introduced in~\cite{BDD+11}, 
extends  Deterministic Timed Automata (DTA) logics  for describing properties of Markov chain models~\cite{DHS09,CHKM09}, 
by employing  LHA (a generalisation of  DTA) 
as  instruments for characterising specific dynamics of an observed DESP model. 
An HASL formula consists of two elements: 1)  
a so-called   synchronising  LHA, i.e.  an LHA  enriched with (state and/or event) \emph{indicators} 
of the observed  DESP and 2) a target expression (see grammar~(\ref{a1})) which  expresses the quantity to be evaluated. 
 The synchronised LHA  is used for selecting the trajectories 
that correspond to the behaviour to of interest. The target expression 
indicates what statistics, i.e. what function of the synchronised LHA data variables,  
will be assessed with respect to  the trajectories selected by the LHA. 

In the following we formally introduce the notion of synchronised LHA  and then informally 
describe the stochastic process resulting from the product of a DESP and a synchronised LHA. 

\begin{definition}
\label{def:dta}
 A \emph{synchronised \haddad automaton} 
 is a tuple 
 $\mathcal A\!=\!\langle E, L, \Lambda, I, F, X, \flow, \rightarrow \rangle$
 where: 

\begin{itemize}
\item
 $E$ is a finite alphabet of events;
 \item
 $L$ is a finite set of locations;
  \item
  $\Lambda: L \rightarrow Prop$ is a location labelling function;
   \item
$I\subseteq L$ is the initial locations;
    \item
$F \subseteq L$ is the final locations;
   \item
$X=(x_1,...x_n)$ is a $n$-tuple of data variables; 
    \item $\flow : L \mapsto Ind^n$ associates an $n$-tuple of indicators with each location 
    (the  $i^{th}$ projection $\flow_i$ denotes the flow of change of   variable $x_i$).
    \item 
    $\rightarrow \subseteq L \times \left( (2^E  \times \const ) \uplus    (\{\sharp\}  \times  \lconst) \right)   \times \updates \times L$
  is the set of edges of the LHA ,
 \end{itemize}
where $\uplus$ denotes  the disjoint union, $\const$ and $\lconst$ denotes the set of possible \emph{constraints}, 
respectively \emph{left closed constraints}, associated with ${\cal A}$ (see details below),  
 $\updates$ is the set of possible updates for the variables of ${\cal A}$ 
 and $Prop\!\subseteq\! Ind$ denotes the subset of boolean valued 
  DESP indicators. 
 \end{definition}

Before presenting informally the synchronisation of a DESP with an LHA we 
start by   describing the various parts of an LHA. 
In what follows we denote indicators  symbolically by greek letters $\alpha,\alpha'\IN Ind$, 
while we use capital letters  $A,B,\ldots$ to refer to names of GSPN places (within 
concrete indicators instances) and $x_1,x_2,\ldots $ to denote LHA variables. Thus, for example, $\alpha\!\equiv\! A+2B$ is an 
indicator whose value is given by the sum of the marking of place $A$ with twice the marking of 
place $B$. 

\paragraph{Location proposition:} function $\Lambda$ associates  each location $l\IN L$ with a  \emph{proposition} (also called  \emph{location invariant} in the remainder)  $\Lambda(l)\IN Prop$
representing a condition under which a location can be entered. A location proposition 
consists of a boolean combination of inequalities involving  DESP indicators and 
has the following form   $\Lambda(l)\!\equiv\! \bigwedge_i (\!\alpha_i \!\prec\! \alpha'_i\!)$ 
with $\alpha_i,\alpha'_i\!\in\! Ind$,  and $\prec \in\!\{=, <, >, \leq,\geq\}$. Notice that 
indicators can be constant functions, thus, for example, a location proposition
 may consist of comparing  indicators' values against  constant thresholds, as in, e.g., $\Lambda(l)\!\equiv\!A\!\!\geq\! 10$, 
or it may consist of comparing different  indicators  one another, as in, e.g., $\Lambda(l)\!\equiv\! A\!\leq\! B$ or 
$\Lambda(l)\!\equiv\! A\!\leq\! B^2 \sqrt{C}$. 
A location proposition (given by $\Lambda$) is  shown by a label 
next to the location it refers to.
For convenience no label is shown next to unconstrained   locations, i.e., locations associated to a  tautology like $\top\equiv (\alpha_i \!=\! \alpha_i)$. 
Location propositions are evaluated against states of a DESP. Thus for $s\IN S$ a  DESP state 
and $l\IN L$ a location of an LHA we say that $s$ satisfies the invariant $\Lambda(l)$, 
denoted $s\!\models\!\Lambda(l)$,  if $\Lambda(l)(s)\!=\! {\bf true}$ 
(where $\Lambda(l)$ is the value of the boolean expression obtained by 
replacing each indicator $\alpha\IN\Lambda(l)$ with its value $\alpha(s)$). 
Furthermore given two edge locations $l$ and $l'$ 
we say that the their respective invariants are \emph{inconsistent},  
denoted $ \Lambda(l)\land\Lambda(l') \Leftrightarrow \false$,  if there cannot 
exist a state $s$ that satisfies $\Lambda(l)\land\Lambda(l')$.  
For example,  if $\Lambda(l)\!\equiv\! A\!\leq\! 2$ and $\Lambda(l')\!\equiv\! A\!>\! 2$ 
 then trivially $ A\!\leq\! 2 \land A\! >\! 2\Leftrightarrow \false$.  
This means that $l$ and $l'$ are  mutually exclusive, which is a necessary 
condition for LHA with multiple initial locations  (see conditions  {\bf c1} below). 

\paragraph{Edge constraint:} edge constraints  describe necessary conditions for an edge to 
be traversed. We denote $\const$ (resp. $\lconst$)  the set of constraints (resp. left-closed constraints) of an LHA edge. 
An \emph{edge constraint} consists of a boolean combination of inequalities involving both DESP indicators and 
LHA variables. They have  the following form   $\gamma\!\equiv\! \bigwedge_j (\sum_{1 \leq i \leq n}\aconst_{ij} x_i \! \prec\! \alpha'_j\!)$ 
with $\alpha_{ij},\alpha'_j\!\in\! Ind$, $x_i\IN X$ and $\prec \in\!\{=, <, >, \leq,\geq\}$. 
Simple examples of edge constraints can be: $\gamma\!\equiv (2x_1 \!+\! 3 x_2 \!\leq\! 5)$ or also $\gamma\!\equiv (A x_1  \!=\! 5)$. 
Given a location $l$ of the LHA and a state $s$ of the DESP, 
the inequalities $\gamma_j\!\equiv\! \sum_{1 \leq i \leq n}\aconst_{ij} x_i  \! \prec\! \alpha'_j$ a constraint $\gamma$ consists of    evolve 
linearly   with time hence each inequality  gives an interval of time during which the constraint $\gamma$ is satisfied. 
We say that a constraint is left closed if, whatever the current state $s$  
(defining the values of the DESP indicators), the time 
at which the constraint is satisfied is a union of left closed intervals 
(for example, $\gamma\!\equiv (x_1 \!\geq\! 5)$ is  left-closed 
whereas $\gamma\!\equiv (x_1 \! >\! 5)$ is not). 
We denote $\lconst\!\subseteq\!\const$ the subset of left-closed constraint. 
For efficiency the constraint of autonomous-edges (see below) must be left-closed. 

\noindent
Edge constraints are evaluated against pairs $(s,\nu)\IN S\!\times\! Val$ where $s\IN S$ is a  
state of a DESP and $\nu:X\rightarrow \mathds{R}\IN Val$ is a  \emph{valuation}  
that  maps every LHA data variable to a real value (we denote $Val$ the set of all possible valuations).
For $\nu\IN Val$, $\nu(x)$ denotes the value of variable $x$ through valuation $\nu$. 
Given $\gamma_j\!\equiv\! \sum_{1 \leq i \leq n}\aconst_{ij} x_i  \! \prec\! \alpha_j$ 
an inequality contained in an edge-constraint $\gamma\!\equiv\! \bigwedge_j \gamma_j$, 
its interpretation w.r.t. $\nu$ and $s$, denoted $\gamma_j(s,\nu)$, is defined by $\gamma_j(s,\nu)= \sum_{1 \leq i \leq
n}\aconst_{ij}(s) \nu(x_i)\!\prec\!\alpha'_j(s)$. 
We write $(s,\nu) \models \gamma_j$ if
$\gamma_j(s,\nu) \!=\! {\bf true}$ and, by extension,  
$(s,\nu) \models \gamma$ iff  $(s,\nu) \models \gamma_j$ for all $j$. 
Furthermore given two edge constraints $\gamma$ and $\gamma'$ 
we say that their conjunction $\gamma\land \gamma'$  is \emph{inconsistent}, 
denoted $\gamma\land \gamma' \Leftrightarrow \false$,  if there 
exists no combination $(s,\nu)\IN S\!\times\! Val$ that satisfies it. 
For example,  if $\gamma\!\equiv\! x_1\!\leq\! 2$ and $\gamma'\!\equiv\! x_2\!>\! 2$ 
are the constraints for two edges then trivially $( x_1\!\leq\! 2)\land (x_1\! >\! 2)\Leftrightarrow \false$, 
meaning the two edges cannot be concurrently enabled (see conditions  {\bf c2} and  {\bf c3} below).

 \paragraph{Edges update:}  an edge update $U\!=\!(u_1, ...,u_n)\IN Up$ is an $n$-tuple of functions 
 characterising how each LHA variable $x_k$ is going to be updated on traversal of the edge. 
 Each function  $u_k$ ($1\!\leq\! k\!\leq\! n$)  of an edge update $U\!=\!(u_1, ...,u_n)\IN Up$ is of the form 
$x_k=\sum_{1 \leq i \leq n}\aconst_i x_i+c$ where the $\aconst_i$ and $c$ are DESP indicators. 
Similarly to edge constraints, updates are evaluated against pairs $(s,\nu)\IN S\!\times\! Val$.  Given an update $U=(u_1,\ldots,u_n)$,
we denote by $U(s,\nu)$ the valuation defined by
$U(s,\nu)(x_k)=u_k(s,\nu)$ for $1 \leq k \leq n$. \\

 \paragraph{Locations flow:}  a location flow is an $n$-tuple  of indicators $flow(l)\!=\!(\alpha_1,\ldots, \alpha_n)$, 
where $\alpha_i\IN Ind$  describes the gradient at which  variable $x_i\IN X$ changes while the automaton sojourns  
in    location $l$. 
Specifically when location $l$ is entered the rate of change of each $x_i$ is established by 
the valuation, w.r.t. to the  state  the DESP is at  on entering of $l$, of the corresponding $\alpha_i$. 
Observe that, if each $\alpha_i$ in $flow(l)$ is a constant function (e.g. $\alpha_i=c_i$, with $c_i\IN\mathds{R}$) 
then  each variable $x_i$ changes at constant rate throughout the sojourn in $l$. 
However this is not necessarily the case for variables whose flow is given by a non-constant indicator, 
like, for example, $\alpha_i\!=\! c_1A + c_2B$, with $c_1,c_2\IN\mathds{R}$ and $A,B$ representing the 
marking of two GSPN places named $A$ and $B$. In this case the flow of change of $x_i$ depends on the 
marking of places $A$ and $B$, and such marking may change during the sojourn in $l$, for example 
if a synchronising self-loop edge $l\to l$ exists which synchronises with some DESP event whose occurrence 
modify the marking of $A$ or $B$. \\

\noindent
Having described the DESP indicators dependent elements of an LHA we now 
see how they are all combined within the characterisation of an LHA edge.

\paragraph{Edges of  an LHA.}  
  An edge 
  $l\xrightarrow{E',\gamma,U}l'$ 
  of  an LHA is labelled  by: 1) a constraint $\gamma$, 2) a set of event labels $E'$, 3) an update $U$. 
   An edge  can be either \emph{synchronous} 
or \emph{autonomous}. 
A synchronous edge is one whose traversal is triggered by the occurrence of an event 
of the DESP  in particular an  event  $e\IN E'\!\subseteq E$ where $E'$ is the set of event names labeling the edge. 
An autonomous edge, on the other hand,  is one whose traversal is independent of the occurrence 
of DESP events, hence  the event label for autonomous edges is  $E'\!\equiv\!\sharp$, where  $\sharp$ 
is the label used for representing a ``pseudo-event''.

The class of LHA for HASL is further restrained by the following conditions: 

\begin{itemize}\itemsep.01pt
   
     \item {\bf c1 (\underline{initial determinism):}}
    $\forall  l \neq l' \IN I$,  $\Lambda(l) \wedge \Lambda(l') \Leftrightarrow \false$. 
    This means that independently  of the
    interpretation of the  indicators, hence of  the synchronising  DESP model, at most 
    one initial location $l \IN I$ can have its constraint   $\Lambda(l)$ verified.

    \item {\bf c2 (\underline{determinism on events:})}
    $\forall E_1,E_2 \!\subseteq\! E$ $:\! E_1 \!\cap\! E_2 \neq \emptyset,$ $\forall l,l',l'' \in L,$
    if $l'' \xrightarrow{E_1,\gamma,U}{l}$ and $l'' \xrightarrow{E_2,\gamma',U'}{l'}$ are two distinct transitions, then either
    $\Lambda(l) \wedge \Lambda(l') \Leftrightarrow \false$ or
    $\gamma \wedge \gamma' \Leftrightarrow \false$. Again this
    equivalence must hold whatever the interpretation  of the
    indicators occurring in $\Lambda(l)$, $\Lambda(l')$, $\gamma$ and
    $\gamma'$. 
    \item {\bf c3 (\underline{Determinism on $\sharp$:})}
    $\forall l,l',l'' \in L,$
    if $l'' \xrightarrow{\sharp,\gamma,U}{l}$ and  $l'' \xrightarrow{\sharp,\gamma',U'}{l'}$
    are two distinct transitions, then either
    $\Lambda(l) \wedge \Lambda(l') \Leftrightarrow \false$ or
    $\gamma \wedge \gamma' \Leftrightarrow \false$. 
    \item {\bf c4 (\underline{no $\sharp$-labelled loops:})}
  For all sequences \\
  $l_0 \xrightarrow{E_0,\gamma_0,U_0} l_1 \xrightarrow{E_1,\gamma_1,U_1}
  \cdots \xrightarrow{E_{n-1},\gamma_{n-1},U_{n-1}}{l_n}$ such that $l_0 = l_n$,
  there exists $i \leq n$ such that $E_i \neq \sharp$.
\end{itemize}

\paragraph{Synchronisation of LHA and DESP.}
The role of a synchronised LHA ${\cal A}$ is to select specific trajectories of 
 a corresponding DESP ${\cal D}$ while collecting relevant data (maintained in the LHA variables)  along the execution. 
Synchronisation is technically  achieved through the product process  ${\cal D}\times {\cal A}$ 
whose formal characterisation,  for the sake of brevity, we omit in this paper: we provide 
however an intuitive description of the ${\cal D}\times {\cal A}$ semantics. 

The product ${\cal D}\!\times\! {\cal A}$ is itself  a DESP whose states   are  triples $(s,l,\nu)$ where $s$ is the current  state of the ${\cal D}$, 
 $l$ the current location of the ${\cal A}$ and $\nu\!:\!X\!\to\mathbb{R}$  the current valuation of the  variables of ${\cal A}$. 
Formally the set of states of the product process ${\cal D}\!\times\! {\cal A}$ is defined as  $S'=(S\times L \times Val) \uplus \{\bot\} $, 
where $Val$ denotes the set of possible variables' valuations and 
$\bot$ denotes the \emph{rejecting} state, i.e., the state entered when synchronisation fails, hence when a trajectory is 
rejected (see below). 
 Notice that a configuration of the product DESP ${\cal D}\times {\cal A}$ has the following form $((s,l,\nu), \tau,sched')$, 
 where $(s,l,\nu)$ is the current state of ${\cal D}\times {\cal A}$, $\tau\IN{\mathds R}^+$ is the current time, and $sched'$ is the schedule of the 
enabled  events of ${\cal D}\times {\cal A}$. 
The synchronisation  starts from the initial state $(s,l,\nu)$, where $s$ an the initial state of the DESP (i.e. $\pi_0(s)>0$), 
$l$ is an initial location of the LHA (i.e. $l\IN I$) and the LHA variables are all initial set to zero (i.e. $\nu=0$)\footnote{Notice that because of the ``initial-nondeterminism'' 
of LHA there can be at most one initial state for the product process.}. 


From the initial state the synchronisation process evolves through  transitions 
where each transition   corresponds to  traversal of either 
a synchronised or an autonomous edge   of the LHA\footnote{notice that because of the  determinism constraints of the LHA edges 
(conditions {\bf c2} and {\bf c3})  
at most only one autonomous or synchronised edge can ever be enabled in any location of the LHA.}. Furthermore 
if an autonomous and a synchronised edge are concurrently enabled the
autonomous transition is taken first. 
Let us suppose that $(s,l,\nu)$ is the current state of process  ${\cal D}\!\times\! {\cal A}$ 
and describe how the synchronisation evolves. 
If in the current location of the LHA (i.e. location $l$ of the current state $(s,l,\nu)$)
there exists  an enabled autonomous
edge $l\xrightarrow{\sharp, \gamma,U} l'$, then that edge 
will be traversed leading to a new  state $(s,l',\nu')$ where the DESP state ($s$) 
is unchanged whereas  the  new location $l'$ 
and the new variables' valuation $\nu'$  might differ from  $l$, respectively  $\nu$, 
as a consequence of the  edge traversal. 
On the other hand if   an event $e$ of process ${\cal D}$ (corresponding to  transition $s\xrightarrow{e} s'$ of ${\cal D}$)  
occurs in  state $(s,l,\nu)$, either  an enabled  synchronous edge  $l\xrightarrow{E',\gamma,U} l'$ (with $e\IN E'$) exists leading to new state 
 $(s',l',\nu')$ of process  ${\cal D}\!\times\! {\cal A}$ (from which   synchronisation  will continue)  
 or  the synchronisation  halts hence the trace is rejected (formally this is achieved with  the system entering the rejecting state $\bot$). 


\paragraph{Enabling of an LHA edge.} 
Let us briefly describe how the enabling, hence the traversal, of an LHA edge is established. 
Let  $(s,l,\nu)$ be the current state of process  ${\cal D}\!\times\! {\cal A}$.  
An  edge $l\xrightarrow{E, \gamma,U} l'$ being it \emph{autonomous} or \emph{synchronous} originating in $l$ is enabled 
if the following two conditions hold: 1) if the edge constraint is satisfied in state $(s,l,\nu)$ (i.e., if $(s,\nu)\models\gamma$) 
2) if the location invariant of the target location $\Lambda(l')$ is satisfied in the state $s'$ reached by traversal of the edge, i.e., 
if $s'\!\models\!\Lambda(l')$ (observe that if the considered edge is autonomous then necessarily $s'\!=\!s$, whereas if it is synchronous 
then possibly $s'\!\neq s$). 
Finally for a synchronous edge to be enabled, in addition to 1) and 2),  it must be the case that 
the DESP event $e$ occurring while in  $(s,l,\nu)$ is  captured by the the edge, i.e., $e\IN E$.\\

\paragraph{Remarks.} 
The above described synchronisation of a DESP and an LHA, which  HASL model checking is based on, 
requires  certain properties to hold, namely: uniqueness, convergence and termination of the synchronisation. 
This means that for ${\cal A}$ a synchronised LHA then for any (infinite) path $\sigma$ of a synchronising DESP model: 
1) there must be exactly one synchronisation with ${\cal A}$, 2) synchronisation cannot 
go on indefinitely due to an infinity of consecutive autonomous events, 3) path $\sigma$ should lead to an 
\emph{absorbing} state (i.e. a final location of the ${\cal A}$ or the rejecting state $\bot$) with probability 1. 
The uniqueness property is  guaranteed by constraint {\bf c1}, {\bf c2} and {\bf c3} of the LHA definition whereas 
 convergence is a consequence of constraint {\bf c4}. 
On the other hand termination of the synchronisation is not explicitly guaranteed, 
however  can be ensured by structural properties of $\mathcal A$ and/or $\mathcal D$.

\ignore{
It should also be said that the restriction to linear equations in the constraints 
and to a linear evolution of data variables can be relaxed, as long as they are not involved in autonomous transitions. 
Polynomial evolution of constraints could easily be allowed for synchronised edges for which 
we would just need to evaluate the expression at a given time instant. 
Since the best algorithms solving polynomial equations operate in PSPACE~\cite{Can88}, 
such an extension for autonomous transitions cannot be considered for obvious efficiency reasons. 
}

\begin{table*}
{\bf Example (synchronisation of DESP and LHA)}: To understand how synchronisation of  a DESP with an LHA works 
let us consider  a simple example. 
Figure~\ref{fig:lha2} depicts a  toy DESP model in GSPN form (on the left) coupled  with a simple  
 LHA (on the right). The GSPN model represents the basic steps 
of gene expression: 1) binding/unbinding of an activator protein to  
the promoter of $gene\_A$;  2) of \emph{transcription} of a gene into an mRNA molecule; 
3) degradation of the mRNA 4) \emph{translation} of the mRNA into the expressed protein  $prot\_A$. 
The states of the DESP consist of 4-tuples $s\!=\!$ ({\fontfamily{phv}\selectfont protA}, {\fontfamily{phv}\selectfont geneA} ,{\fontfamily{phv}\selectfont A\_geneA}, {\fontfamily{phv}\selectfont mrnA}) $\!\in\!\mathds{N}^4$, corresponding to the marking of the 4 places of the GSPN, whereas 
the event set is $E\!=\!\{bind,unbind, degrade, transc, transl\}$, corresponding to the 5 timed-transitions of the GSPN. 
The LHA ${\cal A}$, on the other hand, consists of: two locations, $l_0$ (initial) and $l_1$ (final), and three data variables $t$ (a clock), $n$ (for counting 
the number of occurrences of the $transc$ event) and $a$ (for keeping track of the population of 
$prot\_A$). Notice that the \emph{invariant} of both locations is $\Lambda(l_0)\!=\!\Lambda(l_1)\!=\! \top$, 
(hence no label is associated to $l_0$, $l_1$), meaning that both locations can be entered without constraint. 
The initial state of the product process ${\cal D}\times {\cal A}$ is $s_0\!=\!((2,1,0,0),l_0,\nu_0)$, where $\nu_0$ is 
the zero valuation (i.e., $\nu_0(t)\!=\!\nu_0(n)\!=\! \nu_0(a)\!=\! 0$). 
${\cal A}$ has two  \emph{synchronised}  (self-loop) edges 
$l_0\xrightarrow{\{transc\},n\!< N,\{n++,a=\text{{\fontfamily{phv}\selectfont protA}}\}} l_0$, which synchronises with 
 occurrences of the $transc$ event, and $l_0\xrightarrow{E\setminus \{transc\},n\!< N,\{a=\text{{\fontfamily{phv}\selectfont protA}}\}} l_0$,
which synchronises with  occurrences of any other event but $transc$ (i.e., $E\setminus \{transc\}$). 
The \emph{constraint} for both synchronised edges is $n<N$ which means they can be traversed as long as the number of observed 
occurrences of $transc$, which is stored in $n$, is less than $N$. 
Both \emph{updates} for  the two synchronised edges refer to a single \emph{indicator}, namely {\fontfamily{phv}\selectfont protA}, 
whose value is given by the marking of the GSPN place labelled {\fontfamily{phv}\selectfont protA}, but they 
are slightly different. The update for the edge which synchronises with $transc$ is 
$\{n\!+\!+,a=\text{{\fontfamily{phv}\selectfont protA}}\}$, meaning that whenever the edge is traversed (i.e., on occurrence of a $transc$ event) the counter $n$ is 
incremented and the current marking of place {\fontfamily{phv}\selectfont protA} is stored in $a$. 
On the other hand the update for the edge which synchronises with $E\setminus \{transc\}$ the update is simply 
$\{a=\text{{\fontfamily{phv}\selectfont protA}}\}$ as clearly $n$ must be incremented only on occurrence of $transc$. 
Furthermore ${\cal A}$ has an \emph{autonomous} edge $l_0\xrightarrow{\sharp\!,\!(\boldsymbol {n\!=\! N})\!,\!\emptyset} l_1$ leading 
to the final location $l_1$. Such edge gets enabled as soon as its constraint $(n\!=\! N)$ is satisfied, that is, as soon as a state $s_N\!=\!((n_1,n_2,n_3,n_4),l_0,\nu_N)$,  
is reached with $\nu_N$ being a valuation such that $\nu_N(n)\!=\! N$. In any such state $s_N$ the autonomous edge is traversed 
(leading to state $s_{stop}\!=\!((n_1,n_2,n_3,n_4),l_1,\nu_N)$)
and the synchronisation stops.  
\end{table*}

\begin{center}
\begin{figure*}[ht]
\includegraphics[scale=0.4]{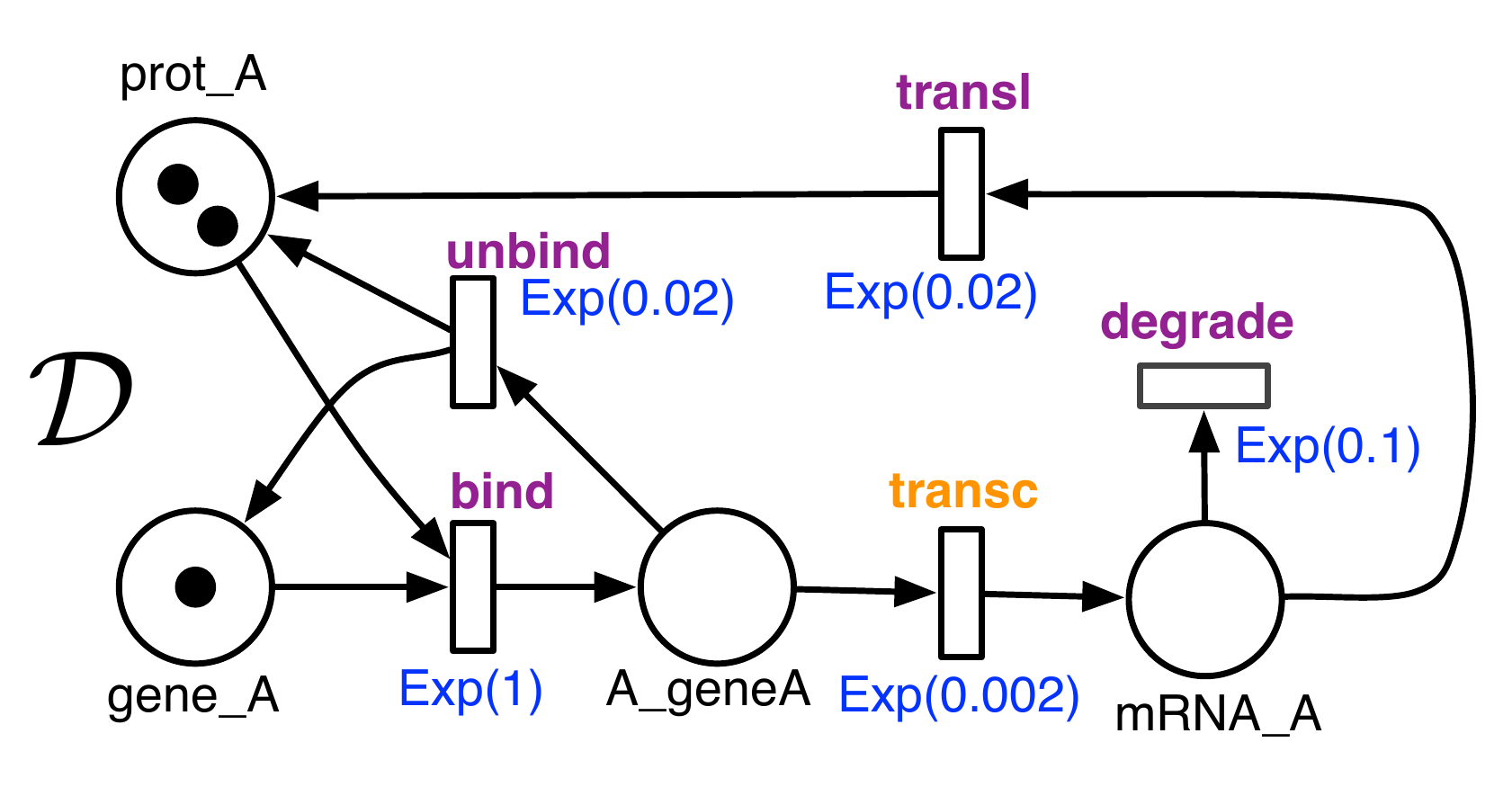}
\hspace{10ex}
\includegraphics[scale=0.4]{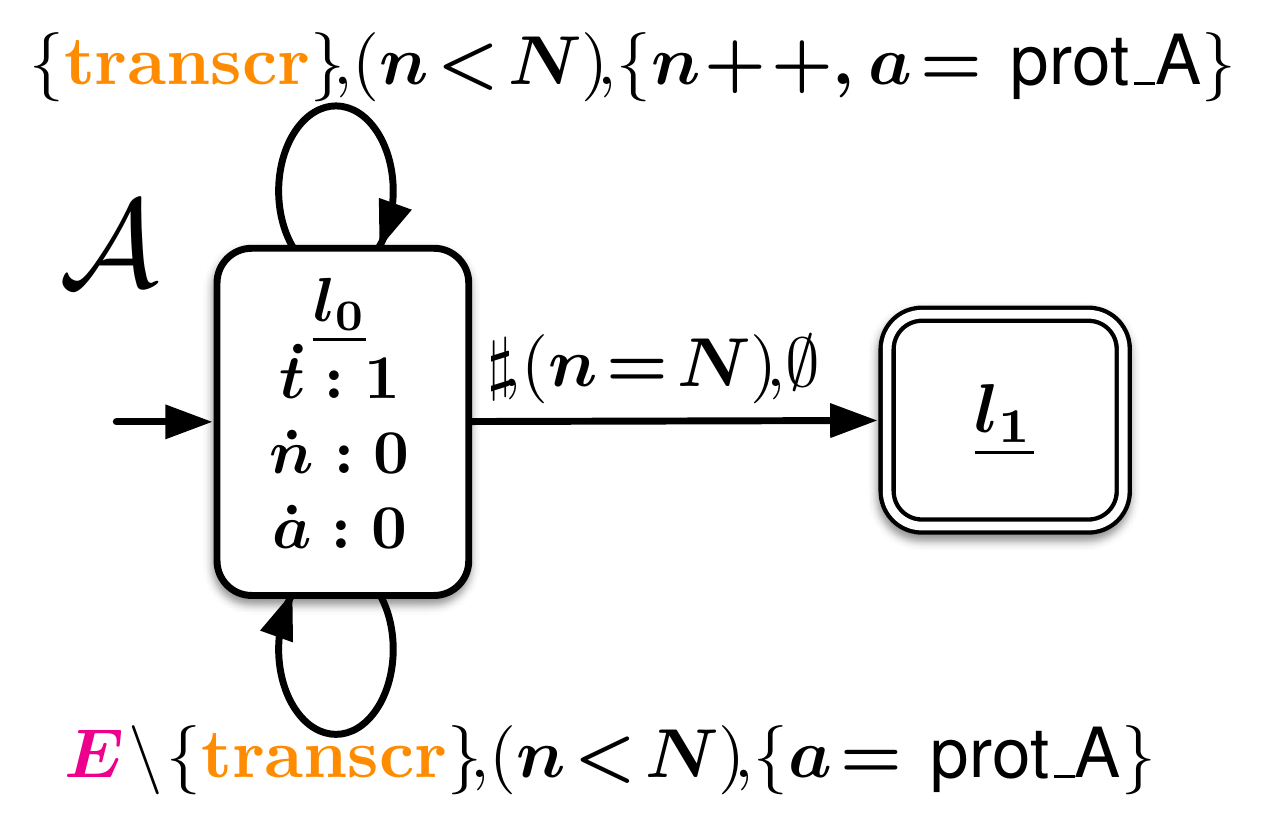}
\caption{Synchronisation between a DESP toy model (left) representing basic steps of gene expression and an LHA (right) which selects 
paths containing $N$ occurrences of the \emph{transcription} event}
\label{fig:lha2}
\end{figure*}
\end{center}

\paragraph{HASL expressions.}
The second component of an \Lang\ formula is an expression related to the automaton. 
Such an expression, denoted $Z$,  
is defined by a specific grammar~\cite{BDD+11}
of which here we consider  only the basic elements given in~(\ref{a1}). 

\begin{equation}
\label{a1} 
\begin{split}
 Z  ::= &\  E[Y]\ |\  P \\
 Y  ::= &\ last(y)\ |\ min(y)\  |\ max(y)\ |\ avg(y)\\ 
 y  ::= &\ c\ |\ x\ |\ y+y\ |\ y \times y\ |\ y/y
\end{split}
\end{equation}

\noindent

\noindent
$Z$ is either either an \emph{expectation} expression $E[Y]$,  or a \emph{probability} expression $P$. 
An expectation expression $Z=E[Y]$ represents the  expected value of  a  random variable $Y$ built on top 
of basic path operators ($last(y)$, $min(y)$, $max(y)$ , $avg(y)$). 
Each such path operator take as argument $y$ an algebraic combination  of the LHA data variables  $x$, 
and is evaluated  along a (synchronised) path that is accepted by the automaton. 
Intuitively the meaning of path operators is as follows: $last(y)$ represents the value    
that expression $y$ has at the instant a path is accepted, while $min(y)$ ($max(y)$, respectively $avg(y)$) 
represents the minimum (maximum, respectively average) value assumed by $y$ along an accepted path. 
Expression $Z\!=\!P$, on the other hand,  simply represents the probability that a  path  is accepted by the LHA. 
This is given by the ratio between the number of accepted paths and total number of paths generated 
throughout a simulation experiment. 

\noindent
In recent updates  the COSMOS model checker~\cite{BDDHP-qest11} has been enriched with facilities  for  assessing the Probability (Cumulative) Distribution Function (PDF, respectively CDF)   
of the value that an expression  $Y$ takes at the end of a synchronising path. 
Notice that PDF and CDF HASL expressions, are  only high-level macros supported by the COSMOS tool in order to   
give the user the possibility to  straightforwardly specify PDF/CDF measures\footnote{Otherwise 
PDF/CDF measures can be encoded explicitly in an LHA but such encoding would usually result 
in a rather complex LHA.}.  
Thus  COSMOS supports the 
following syntax for estimating a PDF measure: $Z=PDF(Y, s, l , h)$,  
where $Y$ is the path dependent expression whose PDF is to be estimated while $l$ and $h$ are the lower, respectively higher, 
bound of the interval representing the support of $Y$ (i.e.  estimation of the PDF of $Y$ is done assuming $Y$ takes value in $[l,y]$) 
and $s\!<\!(h-l)$ is the width of each sub-interval in which the considered support $[l,y]$ is discretised. 
Thus during estimation of  $Z=PDF(Y, s, l , h)$ COSMOS internally maintains 
a  counter for each of the  $(h-l)/s$ sub-intervals. Each such counter is incremented  
if  the value of $Y$ on acceptance of a trace falls in the corresponding sub-interval. 
Then the value returned by COSMOS for $Z=PDF(Y, s, l , h)$  is the array of frequencies 
obtained by dividing each of the above counters by the total number of generated trajectories.


\paragraph{Example}
Having introduced the HASL expression we can now consider some  examples of complete HASL 
formula referred to the model of Figure~\ref{fig:lha2}. 
\begin{itemize}
\item $ \phi_1\!\equiv\!({\cal A},E[last({t})])$:  representing the average  time for completing  $N$ \emph{transcriptions} 
\item ${\phi_2\!\equiv\!({\cal A}, E[max({a})])}$: representing the maximum population reached by protein A within the first $N$   \emph{transcriptions} 
\item ${\phi_3\!\equiv\!({\cal A}, PDF({last(t),0.1,0,10)})}$:  representing the PDF of the delay for  completing $N$  \emph{transcriptions} computed over 
the interval $[0,10]$ with a discretisation step of $0.1$ 

\end{itemize}

\noindent
Formulae $\phi_1,\phi_2,\phi_3$ refer all to the same LHA ${\cal A}$ (Figure~\ref{fig:lha2} right) which means 
the corresponding target measures are estimated with respect to the sampled trajectories 
of the same type (in this case containing exactly $N$ occurrences of the $transc$ event). 
 $\phi_1,\phi_2,\phi_3$  however differ in respect to 
the target expression $Z$. For $\phi_1$ the expression to be estimated is $Z_1\!=\!  E[last({t})]$, 
which is: the expected value that the LHA variable $t$ exhibits at the end ($last(t)$) of a synchronised trace. 
This means that for each trace $\sigma$ sampled from the process ${\cal D}\times {\cal A}$ the value that $t$ 
at the moment $\sigma$ is accepted (i.e., on occurrence of the $N$-th $transc$ event) is retained as a sample for the confidence-interval estimation of the mean value of 
$t$. 
For $\phi_2$ the expression to be estimated is $Z_2\!=\!  E[max({a})]$, which is: 
the expected value of the maximum ($max(a)$)  that  LHA variable $a$ exhibited along a synchronised trace. 
Observe that the maximum of an LHA variable along a trace is automatically computed on-the-fly during the sampling of a trace 
so that the value $max(a)$ for a sampled trace $\sigma$ is known straight away on acceptance of $\sigma$. 
Thus expression $Z_2\!=\!  E[max({a})]$  represents the expected value of the maximum number of protein A 
observed over sampled traces containing $N$ transcription events. 
Finally for $\phi_3$ the expression to be estimated is $Z_3\!=\!  PDF({last(t),0.1,0,10)})$, which corresponds to  
estimating  with what probability the value of $last(t)$ (i.e., the value of $t$ at the end of a sampled trajectory) 
falls within a discretised sub-interval of $[0,10]$. In this case we consider  $k\!=\!(10-0)/0.1\!=\! 100$ sub-intervals of $\Delta\!=\![0,10]$ 
each of width $0.1$ and with the $k$-th subinterval being $\Delta_k\!=\![0+0.1\!\cdot\! k, 0+0.1\!\cdot\!(k\!+\!1)]$ with $0\!\leq\! k\!\leq\! 99$. 
In practice, for estimating $PDF({last(t),0.1,0,10)})$, COSMOS uses $k$ internal  variables, which we may call $N^{\Delta_k}_{last(t)}$ each of which counts   
how many times the value of $last(t)$ observed at the end of a sampled trajectory $\sigma$ has been found falling into the 
$k$-th interval $\Delta_k$. The probability that $last(t)\IN\Delta_k$ 
 then simply corresponds to dividing  $N^{\Delta_k}_{last(t)}/n$, where $n$ is the total number of sampled trajectories. 
 Thus the output of estimating $Z_3\!=\!  PDF({last(t),0.1,0,10)})$ produced by COSMOS is the $k$-tuple of 
 variables $(N^{\Delta_0}_{last(t)}/n$, \ldots, $N^{\Delta_k}_{last(t)}/n$, \ldots $N^{\Delta_99}_{last(t)}/n)$. 

\subsection{COSMOS statistical model checker}
\label{sec:cosmos}
  \cosmos\footnote{\cosmos\ is an acronym of the french
  sentence ``\emph{Concept et Outils Statistiques pour le MOd\`eles
    Stochastiques}'' whose english translation would sound like:
  ``Tools and Concepts for Statistical analysis of stochastic
  models".}~\cite{BDDHP-qest11} is a  prototype software platform for
\Lang-based statistical model checking.  
It employs    \emph{confidence  interval} techniques for estimating the mean value of relevant 
performance measures expressed in terms of HASL formulae against  a given GSPN model. 
\cosmos\ has been recently integrated in the CosyVerif platform~\cite{cosyverif} 
which adds to the original command line interface (available with the first version) 
 the possibility of drawing the input elements (i.e. GSPN and LHA) through a user a graphical interface. 
Software  platforms featuring statistical model 
checking functionalities similar to \cosmos\ include:  \prism~\cite{prism}, \uppaalsmc~\cite{UPPAAL-2012}, 
and \plasma~\cite{JegourelLS12}, 
\apmc~\cite{herault2006apmc}, \ymer~\cite{younes2005ymer},
\mrmc~\cite{mrmc2} and \vesta~\cite{sen2005vesta}.  We refer the reader to~\cite{cosmos,BDDHP-qest11} for more details on  
\cosmos.

 \ignore{
\subsubsection{Related Tools}

Numerous tools are available for performing SMC, some of them also
performing numerical model checkers. Here is a non exhaustive list of
tools freely available for universities:
\cosmos~\cite{BDDHP-valuetools11},
\plasma~\cite{DBLP:conf/tacas/JegourelLS12}, \prism~\cite{KNP11},
\uppaal~\cite{DBLP:journals/corr/abs-1207-1272},
\apmc~\cite{herault2006apmc}, \ymer~\cite{younes2005ymer},
\mrmc~\cite{mrmc2} and \vesta~\cite{sen2005vesta}. \apmc\ is partly
integrated in \prism; thus we have discarded it. Since 2011, \mrmc\ has
not been updated and the corresponding team seems to use
\uppaal. Finally, the link for downloading \vesta\ is not valid anymore.
So we focus on the following tools: \ymer, \prism, \uppaal, \plasma\
and, \cosmos.


\paragraph{Ymer}
It is a statistical model checker for CTMC and generalized 
semi-Markov processes described using the \prism\ language. Its
specification language is a fragment of CSL (an
adaptation of CTL with probabilistic operators replacing path
operators and adding bound to time operators)
without the steady-state operator but including the unbounded until. 

\paragraph{Prism}
It is a tool for performing model checking on
probabilistic models that have been used for numerous
applications. The numerical part of \prism\ is dedicated to
 discrete and continuous Markov chains, Markov decision process
and probabilistic time automata. The statistical part deals with
discrete and continuous Markov chains.  The Prism language 
defines a probabilistic system as a synchronized
product between modules with finite state space and guarded
transitions. Using this representation, this language describes big systems in a compact
way. The verification procedures of \prism\
take as input a wide variety of languages for the
specification of properties. Most of them are based on CSL or PCTL (a formalism
close to CSL).

\paragraph{Uppaal}
It is a verification tool including many formalisms: timed automata,
hybrid automata, priced timed automata, etc. It supports automata-based and
game-based verification techniques. Large scale applications have been analyzed
with \uppaal. It has
recently been enriched with a statistical model checker engine. The corresponding
formalism is based on probabilistic timed systems.  The probabilistic extension 
of \uppaal\ defines distributions as follows: exponential for transitions 
having guards without upper time bound, and uniform otherwise. 
The specification language is (P)LTL (i.e. an adaptation of LTL 
with path operators substituted for quatifiers) with bounded until.

\paragraph{Plasma}
It is a platform dedicated for statistical model checking. \plasma\ is
built with a plugin system allowing a developer extend \plasma.
In addition it can be integrated inaother software via a library.
Prism and Biological languages are supported.
The \prism\ language is extended with more general distributions.
The specification langugage is a restricted version of PLTL
with a single threshold operator.

\paragraph{Discussion} The formalisms are characterized by different
features. First they can be programming languages oriented like \prism\
or formal model oriented like \cosmos. In general the formalisms
take advantage of the concurrency present in the model. \uppaal\
combines both approaches and allows to specify timing requirements 
in the system. Application based languages are also proposed (e.g.
for biological systems in the case of \plasma).

\smallskip Property specification are either defined by some
timed probabilistic temporal logic or by combining hybrid automaton
with appropriate expressions. Whatever the choice, the main distinctive features
are the following ones: presence of the unbounded until, nesting of
probabilistic operators and the expressiveness of time requirements.
Beyond boolean properties and probability computations,
\cosmos\ provides an expressive way to specify performance indices.

}

\section{Measuring   oscillations with  HASL}
\label{sec:oscillationsHASL}

\begin{figure*}[ht]
\centering
\subfigure[{\scriptsize a regular oscillation centred at 1 with maxima at 2, minima at 0, and period equal to 2}]{
\includegraphics[scale=.2]{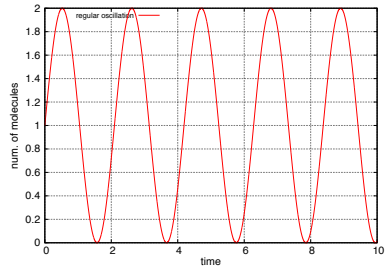}
\label{fig:regoscill}
}
\centering
\hspace{8ex}
\subfigure[{\scriptsize noisy oscillation: by considering a lower and higher thresholds we can characterise noisy-periodicity}]{
\includegraphics[scale=.2]{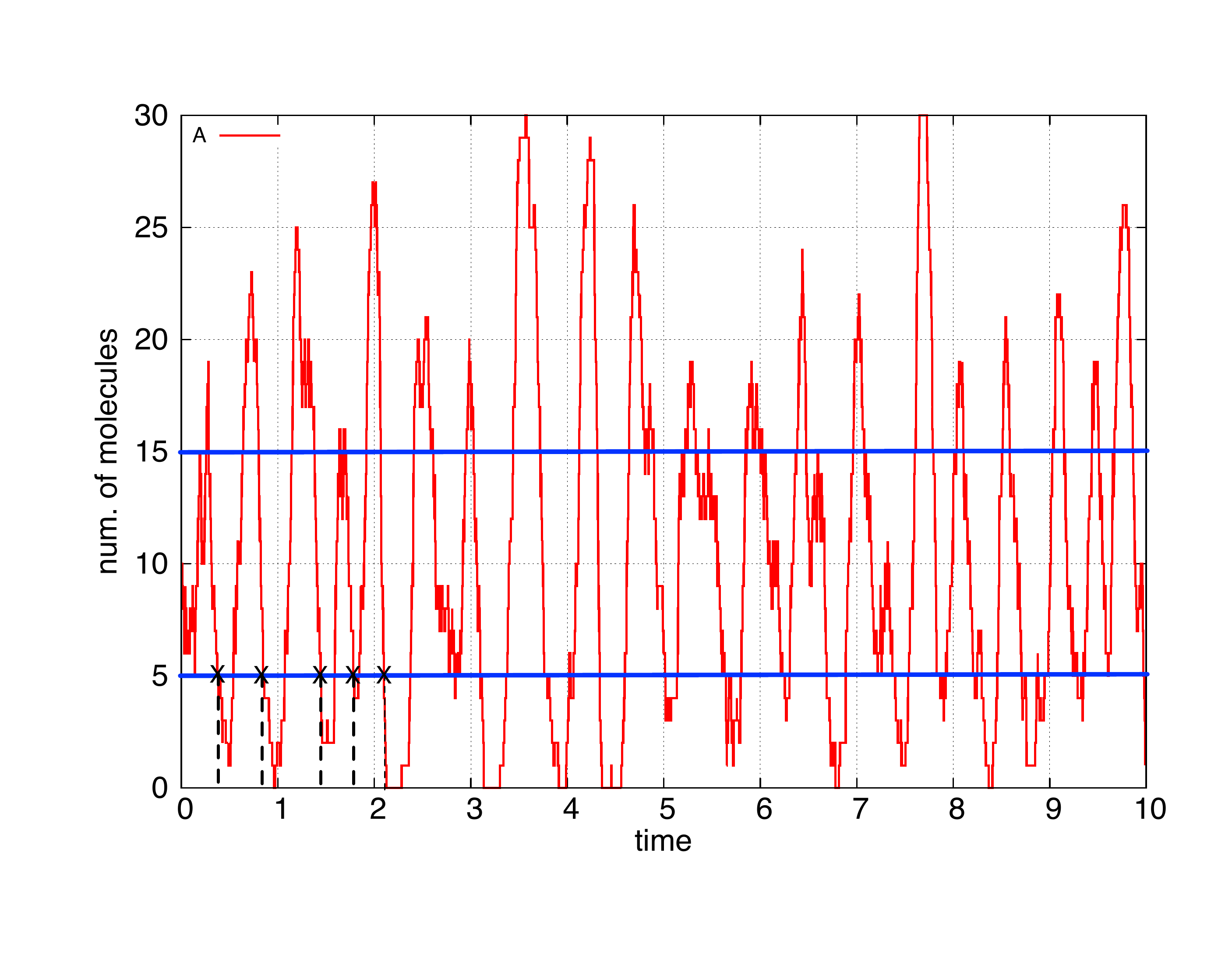}
\label{fig:noisyoscill}
}
\caption{Deterministic versus stochastic (noisy) oscillations }
\label{fig:oscillatingTraces}
\end{figure*}


Intuitively an oscillation is the periodic variation of a quantity around a given value. In mathematical terms this is associated with 
the definition of (non-constant) periodic function. i.e.   function $f:\mathbb{R}^+\to\mathbb{R}$ for which  $\exists P\in\mathbb{R}^+$ such that $\forall t\in\mathbb{R}^+$, $f(t)=f(t+P)$, where $P$ is called the period (e.g.  trace in Figure~\ref{fig:regoscill}). 
In the context of stochastic models  such a ``deterministic'' characterisation of   periodicity   is of little relevance, as 
the trajectories of a stochastic oscillator  being strictly periodic (as in $f(t)=f(t+P)$), will have (unless in degenerative cases) zero probability.  
More generally the traces of   (discrete-state) stochastic oscillators   are  characterised by a remarkable level of noise (e.g. trace in Figure~\ref{fig:noisyoscill}).  

For a stochastic model,  oscillation 
can either be either a \emph{transient} behaviour (a model which oscillates for a finite duration) or a
 \emph{limiting behaviour} (i.e. a model that oscillate sustainably for $t\to\infty$). 
 Spieler~\cite{Spieler13}, whose work tackles  CSL based analysis of sustained CTMC oscillators,  characterised sustainable oscillations as the absence of both \emph{divergence} and \emph{convergence}, 
 meaning that a (discrete-state) stochastic model that oscillates sustainably is one whose trajectories $\sigma$ cannot diverge 
 ($\lim_{t\to\infty} \sigma(t)\!<\!\infty$) nor converge  \\
 ($\nexists n\IN\mathds{N}: \lim_{t\to\infty} \sigma(t)\!=\!n$). 
 
  means that 
that is:  a model oscillates (sustainably) if and only if the probability measure of the  converging trajectories and diverging trajectories is null~\cite{Spieler13}. 
In order to study the dynamics  of stochastic oscillators, in the following  we introduce two (orthogonal) characterisations of 
oscillatory traces. The first one (named  \emph{noisy periodicity}~\cite{Spieler13}) allows us for 
observing  the period duration of an oscillator, while the second is aimed to locating the maximal and minimal 
peaks of oscillating traces. 
We first recall the definition of trajectory of a DESP.



\ignore{
\begin{definition}[Trajectory of a DESP]
Given an $n$-dimensional   DESP ${\cal D}$ a trajectory of ${\cal D}$ is a function $\sigma:\mathbb{R}_{\geq 0} \to \mathbb{R}^n$ 
\end{definition}
In the remainder, with a slight abuse of notation, we will use $\sigma$ to refer to the projection of a trajectory over a specific dimension of a multi-dimensional proceess ${\cal D}$, i.e. the dimension 
of the observed  species. 
}

\subsection{Measuring the period of  oscillations}
\label{sec:periodHASL}
As we pointed out that the mathematical characterisation of periodic function is a too strict one 
for  stochastic modelling framework here we consider  an alternative characterisation of 
periodicity which is suitable for capturing the noisy nature of stochastic oscillations. 
For this we establish a partition of a DESP state-space induced by two threshold  levels $L,H\IN\mathds{N}$ with $L\!<\! H$ 
and we  say that, with respect to a specific observed species (i.e. one of the $n$ dimensions of the DESP)  
a trajectory oscillates or, equivalently  is \emph{noisy periodic},  if it traverse

\begin{definition}[noisy periodic trajectory]
\label{def:noisyperiodicity}
A trajectory $\sigma$   of an $n$-dimensional DESP ${\cal D}$ population model is said \emph{noisy periodic} with respect to the $i^{th}$ 
($1\!\leq i\!\leq\! n$)   observed species  of ${\cal D}$ 
and amplitude levels $L,H\IN\mathds{N}$, with $L\!<\! H$, if $\sigma_i$ visits the intervals    $low=(-\infty,L)$, $mid=[L,H)$ and $high=[H,\infty)$ infinitely often.  
\end{definition}

In the remainder rather than referring to the periodicity with respect to the $i^{th}$ dimension 
we refer to the periodicity with respect to the population of species $A$, where $A$ is the symbolic name of the observed species 
corresponding to one of 
the Petri-net  place in the GSPN representation of  ${\cal D}$. 
Thus, with a slight abuse of notation, we will denote $\sigma_A$ a trace which is noisy periodic 
w.r.t. species $A$. 


Given a  noisy periodic trace    we are interested in 
measuring the basic  characteristics of its oscillatory nature, such as, 
the (average) duration of the oscillation \emph{period}. For this we  first  need to establish what 
we mean by \emph{period}. 
Intuitively a period, for a trace which is noisy periodic  (in the sense of Definition~\ref{def:noisyperiodicity}),  
corresponds to the time interval  between two consecutive 
sojourns  in one  of the two  extreme  regions of the partition (e.g.,  $low$ region),  
interleaved by (at least) one  sojourn  into the opposite region (e.g.,  $high$ region). 
Figure~\ref{fig:noisyperiodictrace} illustrates an example of \emph{period realisations} over a 
noisy periodic trace: the  first two period realisations, denoted $p1$ and $p2$, 
are delimited by the $mid$-to-$low$ crossing points corresponding to the first entering of the $low$ region 
which follows a previous sojourn in the $high$ region. 
Such an intuitive description of \emph{noisy period} of a noisy periodic trace  is formalised in Definition~\ref{def:noisyperiod}. 
We first introduce the notion of crossing points sets associated to a noisy periodic trace. 

Given a  noisy periodic trace $\sigma_A$  we  denote $\tau_{j \downarrow}$ (respectively  $\tau_{j\uparrow}$), 
the instant of time when  $\sigma_A$ enters for the $j$-th time the $low$ (respectively the $high$) region. 
 $T_{\downarrow}\!=\!\cup_j \tau_{j\downarrow}$ (resp. $T_{\uparrow}\!=\!\cup_j \tau_{j\uparrow}$) 
is the set of all \emph{low-crossing points} (reps. \emph{high-crossing points}). 
Observe that $T_{\downarrow}$ and $T_{\uparrow}$ 
reciprocally induce a partition on each other. Specifically  $T_{\downarrow}\!=\!\cup_k T_{k\downarrow}$ 
where $T_{k\downarrow}$ is the  subset of  $T_{\downarrow}$ containing 
the $k$-th sequence of contiguous  \emph{low-crossing points} not interleaved by 
any \emph{high-crossing point}. Formally\\ $T_{k\downarrow}\!=\!\{\tau_{i\downarrow}, \ldots, \tau_{(i+h)\downarrow} | \exists k',  \tau_{(i-1)\downarrow}\!<\! \tau_{k'\uparrow}\!<\! \tau_{i\downarrow}, \\  \tau_{(i\!+\!h)\downarrow}\!<\!\tau_{(k'\!+\! 1)\uparrow} \}$. 
Similarly $T_{\uparrow}$ is partitioned  $T_{\uparrow}\!=\!\cup_k T_{k\uparrow}$ where 
$T_{k\uparrow}$ is the  subset of  $T_{\uparrow}$ containing 
the $k$-th sequence of contiguous  \emph{high-crossing points} not interleaved by 
any \emph{low-crossing point}. 
For example, with respect to  trace $\sigma_A$ depicted in Figure~\ref{fig:noisyperiodictrace} 
we have that $T_{\downarrow}\!=\! T_{1\downarrow}\!\cup\! T_{2\downarrow} \!\cup\! T_{3\downarrow}\ldots $
with $T_{1\downarrow}\!=\! \{\tau_{1\downarrow}, \tau_{2\downarrow}\}$, 
$T_{2\downarrow}\!=\! \{\tau_{3\downarrow}\}$, $T_{3\downarrow}\!=\! \{\tau_{4\downarrow}\}$, 
while $T_{\uparrow}\!=\! T_{1\uparrow}\!\cup\! T_{2\uparrow} \!\cup\! T_{3\uparrow}\ldots $ 
with $T_{1\uparrow}\!=\! \{\tau_{1\uparrow}\}$, $T_{2\uparrow}\!=\! \{\tau_{2\uparrow}\}$, 
$T_{3\uparrow}\!=\! \{\tau_{3\uparrow}\}$. 
\begin{figure}[ht]
\centering
\includegraphics[scale=.65]{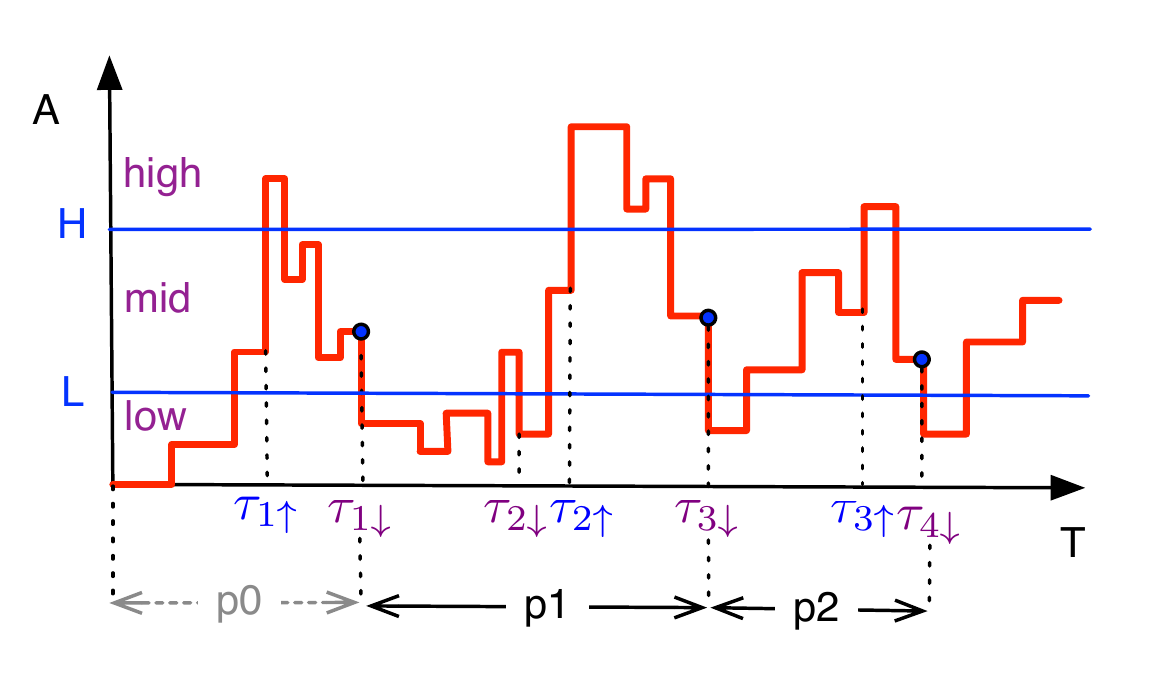}
\caption{Example of  trace $\sigma_A$  which is  {\it noisy periodic} w.r.t to species $A$ and  a given ($L\!<\!H$ induced) partition of a DESP  state space. 
} 
\label{fig:noisyperiodictrace}
\end{figure}
Observe that   a noisy periodic trace can be seen  as 
a collection of realisations  of certain random variables. 
For example the instants of time  
   $\tau_{j \downarrow}$,  $\tau_{j\uparrow}$ are realisations of 
the  random variables (which we could denote $x_{j \downarrow}$, respectively $x_{j \uparrow}$) 
corresponding to the timing of    entering the $low$, respectively  $high$,  regions. 
Similarly the duration of the $k$-th period contained in a trace can be seen  
as the realisation of a random variable\footnote{Note that the duration of the $k$-th period of a trace is,  in turn, dependent on the  the random variables $x_{j \downarrow}$, $x_{j \uparrow}$ corresponding to the entering of the $low$, $high$ regions.}. 
We formalise the notion of noisy period realisation in the next definition. 

\begin{definition}[$k^{th}$noisy period realisation]
\label{def:noisyperiod}
For $\sigma_A$  a noisy periodic trajectory with crossing point times $T_{\downarrow}\!=\!\cup_{k\geq 1} T_{k\downarrow}$ , 
respectively $T_{\uparrow}\!=\!\cup_{k\geq 1} T_{k\uparrow}$, 
 the   realisation of the $k^{th}$ noisy period, denoted $t_{p_k}$, is defined as  $t_{p_k}\!=\! min(T_{(k\!+\!1)\downarrow})- min(T_{k\downarrow})$.
\end{definition}

\noindent  
Observe that a  noisy periodic trace (as of Definition~\ref{def:noisyperiodicity}) contains infinitely many realisations of (noisy) periods. 
In the remainder we will  refer to the $N$-prefix of a noisy periodic trace $\sigma_A$, meaning the prefix of $\sigma_A$ 
that consists of the first $N$ noisy period realisations. 

As an example of period realisations, let us consider  the noisy periodic trace  in Figure~\ref{fig:noisyperiodictrace}  
whose first two  period realisations  are $t_{p_1}\!=\! \tau_{3\downarrow}-\tau_{1\downarrow}$
and $t_{p_2}\!=\! \tau_{4\downarrow}-\tau_{3\downarrow}$. 
Notice that the time interval denoted as $p0$ in Figure~\ref{fig:noisyperiodictrace} 
does not represent a complete period realisation as 
 there's no guarantee that   $T=0$ corresponds with the actual entering into the $low$ region.  
Definition~\ref{def:noisyperiod} correctly does not account for the first \emph{spurious} period $p0$.

Having introduced the notion of  noisy period realisation 
 we now look at the problem of 
estimating two characteristic measures related to it, namely, the \emph{period average}  and the \emph{period fluctuation}. 
By \emph{period average} we  simply  
mean the average value of the  period realisations sampled along a trace. 
On the other hand by \emph{period fluctuation} we mean a measure of the variability of 
the period realisations along a trace, that is,  a measure of how much  periods observed along 
a trace  vary one another. 
Observe that, from the point of view of analysis,  \emph{period fluctuation}
allows us to analyse  the regularity of the observed oscillator. 
In this respect a ``regular'' oscillator is one whose traces consists of little variable periods (i.e.,  small fluctuation), 
as opposed to an ``irregular'' one whose traces exhibits variable periods (i.e.,  large fluctuation). 
We demonstrate the analysis of oscillation regularity through fluctuation assessment in Section~\ref{sec:circlock}).

\begin{definition} [period average]
\label{def:averagenoisyperiod}
For $\sigma_A$  a noisy periodic trajectory 
the  period average of the first $n\!\in\! \mathbb{N}$ period realisations, denoted $\overline{t}_{p}(n)$, is defined as $\overline{t}_{p}(n)\!=\!\frac{1}{n}\sum_{k=1}^n t_{p_k}$, 
where $t_{p_k}$ is the $k$-th period realisation. 
\end{definition}

Observe that, in the long run, the   average value of the noisy-period of oscillations corresponds to the limit $\overline{t}_{p}=\lim_{n \to \infty}\overline{t}_{p}(n)$. 

\begin{definition}[period fluctuation]
\label{def:flcutnoisyperiod}
For $\sigma_A$  a noisy periodic trajectory 
the  period fluctuation of the first $n\!\in\! \mathbb{N}$ period realisations, denoted $s^2_{t_p}(n)$, is defined as $s^2_{t_p}(n)\!=\!\frac{1}{n}\sum_{k=1}^n (t_{p_k}-\overline{t}_{p}(n))^2$, 
where $t_{p_k}$ is the $k$-th period realisation and $\overline{t}_{p}(n)$ is the period average for the first $n$ period realisations. 
\end{definition}

Note that the period fluctuation is in essence defined as the variance of the period realisations along a trace. 
In the remainder we show how, through  automaton ${\cal A}_{per}$, we can estimate the 
period fluctuation \emph{on-the-fly}, that is, as the noisy periodic traces are  generated and 
scanned by ${\cal A}_{per}$. For this we employ an  
adaptation  of the so-called \emph{online algorithm}~\cite{Knuth:1997:ACP:270146} for computing the variance out of a sample of observations.   



\begin{figure*}[ht]
\centering
\fbox{
\includegraphics[scale=.35]{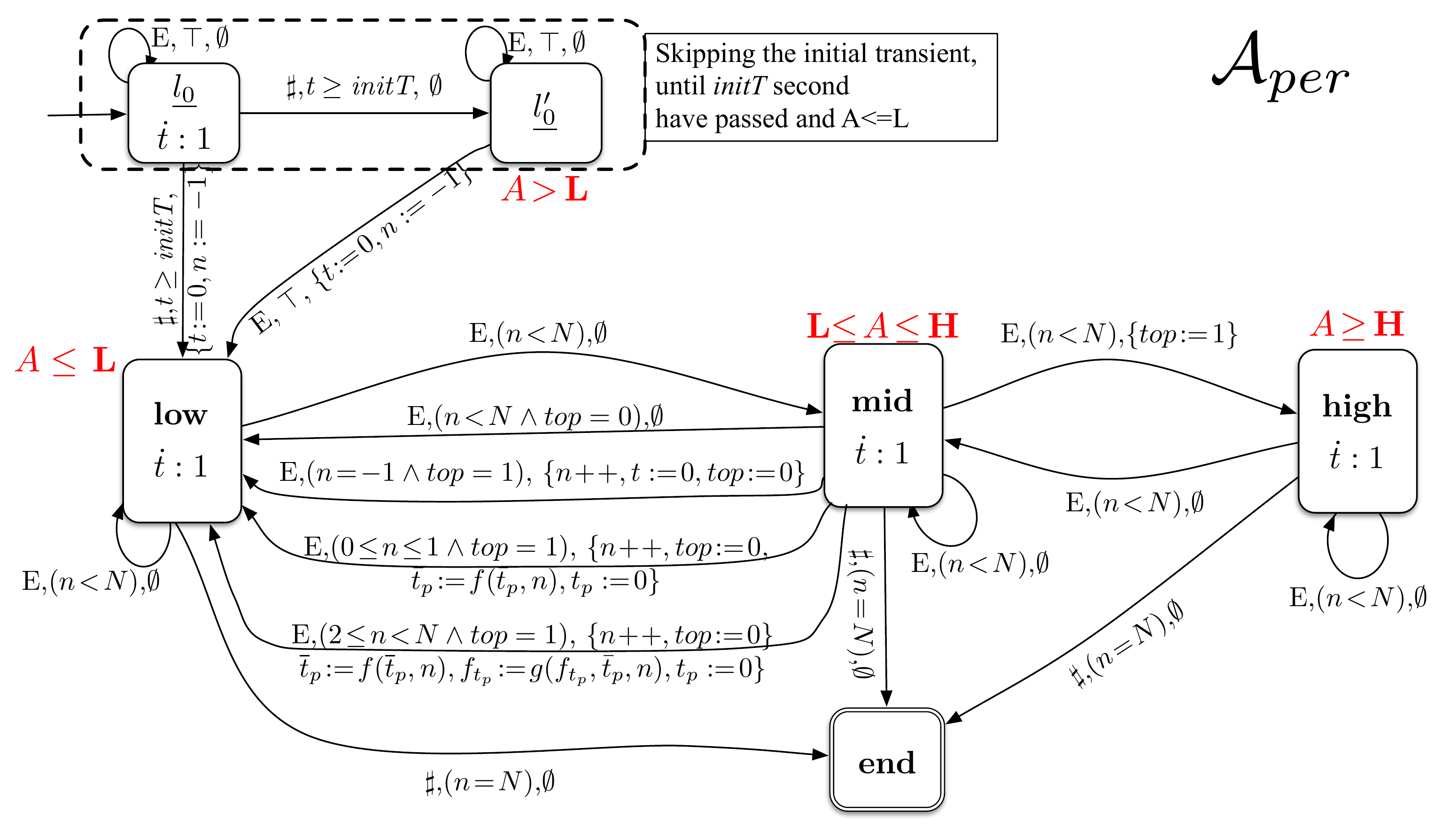}}
\caption{ ${\cal A}_{per}$: an LHA for selecting noisy periodic traces (with respect to an observed species $A$) related to  partition $low=(-\infty,L]$, $mid=(L,H)$ and $high=[H,+\infty)$. }
\label{fig:lhaoscill}
\end{figure*}

In the following we  introduce 
an LHA automaton, called ${\cal A}_{per}$, which is targeted to estimating  
both the average and the fluctuation of the first $N$    the 
period realisations  occurring along the simulated noisy periodic traces.

\paragraph{The automaton ${\cal A}_{per}$.} The LHA  ${\cal A}_{per}$ depicted in Figure~\ref{fig:lhaoscill} is   designed for detecting noisy periods of an observed species (here denoted  $A$). 
It consists of an initial \emph{transient filter} (locations $l_0$, $l'_0$) plus three main locations {\bf low}, {\bf mid} and {\bf high} 
(corresponding to the    partition of $A$'s domain induced by thresholds $L<H$). 
The intuition behind the structure of the ${\cal A}_{per}$ automaton is as follows: 
the transient filter is used to simply let the simulated trajectory unfold for a given duration (which is useful for 
eliminating the effect of the initial transient from long measures, see below). 
On the other hand the actual analysis of the periodicity is performed by  
looping within the {\bf low}, {\bf mid} and {\bf high}  locations. In particular 
each of these three locations corresponds to one region of the partition induced by the considered 
$L\!<\! H$ thresholds: location {\bf low} corresponds to region $low=(-\infty,L]$, location {\bf mid} to region $mid=(L,H)$ 
and location {\bf high} to region  $high=[H,+\infty)$. Thus while a trace of the considered DESP is simulated 
the ${\cal A}_{per}$ automaton oscillates in between locations {\bf low}  and {\bf high}, passing through {\bf mid}, following 
the profile of the observed species $A$. The completion of a loop from {\bf low} to  {\bf high} and back to {\bf low}  corresponds 
to detection of a period realisation (as of Definition~\ref{def:noisyperiod}) on occurrence of which a number 
of relevant information is stored in the data variables of ${\cal A}_{per}$. The analysis of the 
simulated trajectory ends by entering location {\bf end} as soon as the $N$-th period has been detected. 
Below we provide a more detailed description of the functioning of ${\cal A}_{per}$.

 \begin{table*}
 \begin{center}
     \begin{tabular}{ | c | c | c | p{4.5cm} |}
    \hline
    \multicolumn{4}{|c|}{Data variables} \\
    \hline
    name & domain  & update definition & description  \\ \hline
    
        $t$ & $\mathbb{R}_{\geq 0}$ & \emph{reset} & time elapsed since beginning measure (first non-spurious period) \\ \hline
      $n$ & $\mathbb{N}$ & \emph{increment} & counter of  detected periods \\ \hline
      $top$ & bool & \emph{complement} & boolean flag indicating whether the high part of the partition has been entered \\ \hline
    
     ${t}_p$ & $\mathbb{R}_{\geq 0}$ &  \emph{reset}  & duration of last detected period  \\ \hline
   
    $\bar{t}_p$ & $\mathbb{R}_{\geq 0}$ & $  f(\bar{t}_p,t_p,n)=\frac{\bar{t}_{p_n}\cdot n+t_p}{n+1}$ & mean value of $t_p$ \\
    
            & & & \\ \hline
    
        & & & \\
    $s^2_{t_p}$ & $\mathbb{R}_{\geq 0}$ & $g(s^2_{t_{p}},\bar{t}_p,t_p,n)  =\frac{[(n-1) s^2_{t_{p}}+ (t_p-\bar{t}_p)(t_p\! -\! f(\bar{t}_p,t_p,n+1))]}{n}$ & fluctuation  of $t_p$      \\         
    & & & \\ \hline
    \end{tabular}
\end{center}
    \caption{ The data variables  of automata ${\cal A}_{per}$ of Figure~\ref{fig:lhaoscill} for measures of noisy-periodicity}
    \label{tab:lhavars2}
    \end{table*}

The synchronization  starts  in $l_0$ where the automaton loops through synchronous edge $l_0\xrightarrow{E,\top,\emptyset} l_0$,  simply 
observing the occurrences of  any event of $E$ (i.e.,  the event set of 
synchronised GSPN-DESP model), hence   letting the simulated trajectory unfold  for a fixed duration given by parameter 
$initT$: when $t\!\geq\! initT$  ${\cal A}_{per}$ moves, through autonomous edges, to either $l'_0$, if by  $t\!=\! initT$ the simulated trajectory is not in a state 
of the $low=(-\infty,L]$ region (i.e., if the invariant  
 $A\! >\! L$ of $l'_0$ is satisfied), or to location {\bf low} if the current state of the trajectory belongs to the $low=(-\infty,L]$ region 
 (i.e., if the invariant $A\!\leq\! L$ of {\bf low} is fulfilled). 
If  $l'_0$ is entered then the simulated trace is let further unfolding  (synchronised self-loop $l'_0\xrightarrow{E,\top,\emptyset} l'_0$) 
until a state within $low=(-\infty,L]$ is reached, in which case the invariant of location {\bf low} is fulfilled hence the autonomous edge  $l'_0\xrightarrow{\sharp,\top,\ldots}${\bf low} is traversed. Observe that on entering of {\bf low} the global timer variable $t$ is reset 
and the period counter $n$ is initialised to $-1$ (this is so to avoid the first spurious  period, denoted $p0$ in Figure~\ref{fig:noisyperiodictrace}, 
to be considered amongst the detected ones). 
 Once  in location {\bf low} the actual detection of the period realisations begins\footnote{
Although the LHA in Figure~\ref{fig:lhaoscill} is designed so that  periods detection starts from $low$ it can be easily adapted so 
that the identification starts from any location.} and the automaton gets looping 
 between the {\bf low}, {\bf mid} and {\bf high}  locations for as long as $N$ periods have been detected. 
From {\bf low} the automaton follows the profile exhibited by the observed population   $A$, 
thus moving to  {\bf mid} (and possibly back) as soon as 
the population of $A$ grows and a state of the $mid=(L,H)$ region is entered (i.e., corresponding to the $L\!< A\!<\!H$ invariant of {\bf mid} location becoming satisfied), 
and then to {\bf high} (and possibly back) as soon as the population of $A$ enters the $high=[H,+\infty)$ region (corresponding to the $A\!\geq\!H$ invariant of {\bf high} location). On entering the {\bf high} location the boolean variable $top$ is set to true (i.e., $top\!=\! 1$). This allows then for  distinguishing 
between the   {\bf mid}-to-{\bf low} transitions of  kind {\bf mid}$\xrightarrow{E,(\ldots \land top=1),\ldots}${\bf low}, which correspond to an actual closure of a period realisation (i.e., those   $\tau_{j\downarrow}$  
preceded by a sojourn in the  $high=[H,+\infty)$ region), from those of  kind   
{\bf mid}$\xrightarrow{E,(\ldots \land top=0),\ldots}${\bf low} 
which correspond to a return to {\bf low} without having previously sojourned in {\bf high}. 
Observe that from  {\bf mid} location there are four possible (mutually exclusive) ways of entering 
the {\bf low} location. If the sojourn in {\bf mid} has not been preceded by a sojourn in {\bf high} 
edge {\bf mid}$\xrightarrow{E,(n\!<\! N \land top=0),\ldots}${\bf low} is enabled. 
On the other hand if the sojourn in {\bf mid} has  been preceded by a sojourn in {\bf high} but 
{\bf low} is going to be re-entered for the first time (i.e., $n=-1$) then the timer $t$ is reset (representing the 
start time of actual period detection) and the counter of detected periods $n$ is 
set to zero (again representing the actual beginning of counting of period detection). 
On the other hand if  the sojourn in {\bf mid} has  been preceded by a sojourn in {\bf high} 
and the period to be detected is the first one (i.e., $0\!\leq n\!\leq\! 1\land top\!=\! 1$) then 
we increment the counter $n$ of detected period, we reset the flag $top$ and 
update the value of the average duration of detected period $\hat{t}_p$ while we do not update 
the variable $s^2_{t_p}$ as in order as 
in order to update the value of the fluctuation of the detected period duration  we need that at least two periods have been detected. 
Finally if the period to be detected is the $n$-th with $n\!\geq\! 2$ (i.e., corresponding to guard 
$2\!\leq n\!\leq\! N\land top\!=\! 1$) we do the same update operations of the previous case but also update $s^2_{t_p}$.

The automata uses variable $n$ to count the number of  noisy periods detected along a trajectory, and stops as soon as  the $N^{th}$ period is detected (i.e. event bounded measure). 
The boolean variable $top$, which is set to $true$ on entering of the $high$ location,  allows for detecting  the completion of a period  (i.e. crossing from 
$mid$ to $low$ when $top$ is $true$). Two clock variables, $t$ and $t_p$, maintains respectively the total simulation time 
 as of the beginning of the first detected period ($t$) and the duration of the last detected period ($t_p$). Finally 
variable $\overline{t_p}$ maintain the average duration of all (so far) detected periods while 
$s^2_{t_p}$ stores the fluctuation (or variability) of duration (i.e. how far the duration of each detected period is distant from 
its average value computed along a trajectory)  of all (so far) detected periods. 

\begin{theorem}
\label{theo:Aper}
If a trace $\sigma_A$ is noisy periodic w.r.t. amplitude levels $L,H\IN\mathds{N}$ then it 
is  accepted by automaton ${\cal A}_{per}$ with parameters $L$, $H$, $initT\IN\mathds{R^+}$ and $N\IN\mathds{N}$
\end{theorem}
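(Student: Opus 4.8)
\emph{Proof plan.} Acceptance of $\sigma_A$ by ${\cal A}_{per}$ amounts to two facts about the synchronised process ${\cal D}\!\times\!{\cal A}_{per}$ driven along $\sigma_A$: (i) the run never enters the rejecting state $\bot$, and (ii) it eventually enters the final location $\mathbf{end}$. I would establish both by following the automaton region by region, using throughout that a noisy periodic trace (Definition~\ref{def:noisyperiodicity}) has unbounded duration and visits each of $low$, $mid$, $high$ infinitely often.

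First I would dispose of the transient filter. In $l_0$ the only outgoing synchronous edge is the self-loop $l_0\xrightarrow{E,\top,\emptyset} l_0$, which absorbs every DESP event, so no event can trigger rejection there; and since $\sigma_A$ is infinite the clock $t$ eventually satisfies $t\!\geq\! initT$, at which point, the invariants $A>L$ of $l'_0$ and $A\le L$ of $\mathbf{low}$ being complementary, exactly one of the two (left-closed) autonomous edges out of $l_0$ is enabled (as guaranteed by conditions c1--c3) and is taken. If $l'_0$ is entered, its self-loop again absorbs all events, and because $\sigma_A$ re-enters $low$ infinitely often it reaches, at some later instant, a state with $A\le L$; at that instant the left-closed autonomous edge $l'_0\xrightarrow{\sharp,\top,\ldots}\mathbf{low}$ becomes enabled and (having priority over synchronous edges) is immediately traversed, resetting $t$ and setting $n=-1$. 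Hence the run reaches $\mathbf{low}$ without ever rejecting.

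The core of the argument concerns the gadget $\{\mathbf{low},\mathbf{mid},\mathbf{high}\}$. The key structural fact is that the location invariants $A\le L$, $L<A<H$, $A\ge H$ \emph{partition} the DESP state space, since $A$ ranges over $\mathbb{N}$ and $low=(-\infty,L]$, $mid=(L,H)$, $high=[H,+\infty)$ are a partition; thus for every DESP state $s$ exactly one of the three invariants holds. I would then check, against Figure~\ref{fig:lhaoscill}, that out of each of $\mathbf{low}$, $\mathbf{mid}$, $\mathbf{high}$ there is, for every event label in $E$, a family of synchronous edges whose targets cover the three locations and whose constraints jointly exhaust all cases compatible with the target invariant and the guard $n<N$ (the several $\mathbf{mid}\to\mathbf{low}$ edges differ only in their $top$/$n$/$t$/$\bar t_p$/$s^2_{t_p}$ updates, not in which one is enabled). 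Consequently, whenever a DESP event $s\xrightarrow{e}s'$ occurs while $n<N$, exactly one synchronous edge is enabled, namely the one into the location whose invariant $s'$ satisfies, so the synchronisation does not reject; and condition c4 guarantees that only finitely many autonomous edges fire between two events, so the run is well defined.

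Finally, for termination, each time $\sigma_A$ rises from $low$ through $mid$ into $high$ and later descends through $mid$ back into $low$, the automaton mirrors this by a path $\mathbf{low}\to\mathbf{mid}\to\mathbf{high}\to\mathbf{mid}\to\mathbf{low}$ that sets $top=1$ on entering $\mathbf{high}$ and, on the subsequent $\mathbf{mid}\to\mathbf{low}$ step with $top=1$, either starts the measurement (first time, $n=-1\to 0$) or increments the period counter $n$; the $n=-1$ initialisation ensures the spurious initial segment $p0$ of Figure~\ref{fig:noisyperiodictrace} is not counted, matching Definition~\ref{def:noisyperiod}. Since $\sigma_A$ visits both $low$ and $high$ infinitely often, there are infinitely many such closed loops, so $n$ grows without bound and in particular takes the value $N$; at that instant the guard $n=N$ of the (left-closed) autonomous edge to $\mathbf{end}$ is met, it is traversed with priority, and $\sigma_A$ is accepted. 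The main obstacle I anticipate is the case analysis in the previous paragraph: one must verify precisely, edge by edge in Figure~\ref{fig:lhaoscill}, that the synchronous edges leaving $\mathbf{low}$, $\mathbf{mid}$, $\mathbf{high}$ are \emph{complete} (some edge is always enabled) as well as deterministic, and that the $top$/$n$ bookkeeping on the four distinct $\mathbf{mid}\to\mathbf{low}$ edges fits together so that $n$ counts exactly the period realisations of Definition~\ref{def:noisyperiod}.
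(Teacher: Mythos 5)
Your proposal is correct and follows essentially the same route as the paper's proof: an initial-transient step reaching $\mathbf{low}$ with $n=-1$, then an induction on successive $low\!\to\!high\!\to\!low$ sojourns (each incrementing $n$), and termination via the autonomous $n=N$ edge into $\mathbf{end}$. The only difference is that you additionally make explicit the non-rejection argument (completeness of the synchronous edges out of $\mathbf{low}$, $\mathbf{mid}$, $\mathbf{high}$), which the paper leaves implicit under the phrase ``because of the structure of ${\cal A}_{per}$''; this is a welcome tightening rather than a different approach.
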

\begin{proof}

By hypothesis $\sigma_A$ (the  projection, w.r.t  an observed species $A$,  of a trace $\sigma$ an $n$-dimensional DESP ${\cal D}$) 
  is noisy periodic w.r.t. the partition of species $A$'s domain into regions $low=(-\infty,L)$, $mid=[L,H)$ and $high=[H,\infty)$. 
The initial state of the synchronised process ${\cal D}\times{\cal A}_{per}$  
will be $(\sigma_A[0],l_0,Val_0)$ with $Val_0$ being the initial valuation 
with $Val_0(x)\!=\!0$ for all variables $\forall x\IN X$ of $A_{per}$. 

To demonstrate that $\sigma_A$ is accepted by ${\cal A}_{per}$ we need to show that 
starting from the the initials state $(\sigma_A[0],l_0,Val_0)$  a final state of ${\cal D}\times{\cal A}_{per}$, i.e., a state 
such that the current  location is the accepting location {\bf end} of ${\cal A}_{per}$,   is reached. 
For this we proceed by induction w.r.t. 
the number of subsequent sojourns in the $low$ and $high$ regions. 
In the remainder we use the  notation $(s,l,Val)\xrightarrow{*} (s',l', Val')$ to indicate 
that state  $(s',l',Val')$ of process  ${\cal D}\times{\cal A}_{per}$ is reachable 
from  $(s,l,Val)$. 
We split the demonstration in parts corresponding to the traversal of ${\cal A}_{per}$ locations  
resulting from synchronisation with trace $\sigma_A$:

\begin{description}
\item[[{\bf init}]] Let $l_0\IN\mathds{N}$ be the index of the first state 
of $\sigma_A$ that belongs to $low$ and that follows $\sigma_A@initT$ (where $initT$ is the parameter of ${\cal A}_{per}$), that is: 
$l_0\!=\!min\{i\IN\mathds{N}\mid i> i_{initT}\land \sigma[i]\IN low\}$, with $i_{initT}$ being the index of 
the state $\sigma_A$ is in at time $initT$ (observe that since $\sigma_A$ is assumed 
noisy periodic then $min\{i\IN\mathds{N}\mid i> i_{initT}\land \sigma[i]\IN low\}$ 
is guaranteed to exist).  
Thus because of the structure of ${\cal A}_{per}$,  
$(\sigma_A[0],l_0,Val_0)\xrightarrow{*} (\sigma[l_0],\text{\bf low}, Val_{l_0})$, 
with $Val_{l_0}(x)\!=\!0$, for $ x\!\neq\! n $ and $Val_{l_0}(n)\!=\!-1$. \\

\item [[{\bf low}$\to${\bf mid}$\to${\bf high}]\hskip -1ex] Since $\sigma_A$ is noisy periodic then  $\exists m_1,h_1\IN\mathds{N}:h_1\!>\!m_1\!>\!l_0$ such that $\sigma_A[ m_1]\IN mid$, $\sigma_A[ h_1]\IN high$ 
hence, because of the structure of ${\cal A}_{per}$,  it follows  that $(\sigma[l_0],\text{\bf low}, Val_{l_0})\xrightarrow{*} (\sigma[m_1],\text{\bf mid}, Val_{m_1}) 
\xrightarrow{*} (\sigma[h_1],\text{\bf high}, Val_{h_1})$ 
with $Val_{h_1}(top)=1$, because of the update $\{top:=1\}$ of the arc  leading to   location {\bf high}. \\

\item [[{\bf high}$\to${\bf mid}$\to${\bf low}]\hskip -1ex] similarly since $\sigma_A$ is noisy periodic then  $\exists l_1,m_{1b}\IN\mathds{N}:l_1\!>\!m_{1b}\!>\!h_1$ such that $\sigma_A[ m_{1b}]\IN mid$, $\sigma_A[ l_1]\IN low$. 
hence $(\sigma[h_1],\text{\bf high}, Val_{h_1})\xrightarrow{*} (\sigma[m_{1b}],\text{\bf mid}, Val_{m_{1b}}) 
\xrightarrow{*} (\sigma[l_1],\text{\bf low}, Val_{l_1})$ 
with $Val_{m_{1b}}(top)=1$, $Val_{m_{1b}}(n)=-1$ hence  $Val_{l_1}(top)=0$, $Val_{l_1}(n)=0$, $Val_{l_1}(t)=0$, since, 
because of   $Val_{m_{1b}}$, location {\bf low} is entered through edge \\ 
{\bf mid}$\xrightarrow{E,(n\!=\! -1 \land top=1),\{n+\!+,top:=\!0,t:=\!0\}}${\bf low} \\

\item [{\bf induction}] since $\sigma_A$ is noisy periodic then the $high$, $low$ regions are entered infinitely often, 
hence the [{\bf low}$\to${\bf mid}$\to${\bf high}] and [{\bf high}$\to${\bf mid}$\to${\bf low}] steps of the proof hold for each successive  
iteration. This means that if $l_i$ is the index corresponding to the $i$-th that  $\sigma_A$ enters the $low$ region 
after having sojourned in the $high$ region then because of the periodicity of $\sigma_A$ 
$\exists m_{i+1},h_{i+1},m{(i+1)b}\IN\mathds{N}:l_{i+1}\!>\!m_{(i+1)b}\!>\!h_1\!>\!m_1\!>\!l_i$ such that 
$\sigma_A[ m_{ib}],\sigma_A[ m_{(i+1)b}]\IN mid$, $\sigma_A[ l_i], \sigma_A[ l_{i+1}]\IN low$, $\sigma_A[ h_{(i+1)}]\IN high$, 
hence $(\sigma[l_i],\text{\bf low}, Val_{l_i})\xrightarrow{*} (\sigma[l_{i+1}],\text{\bf low}, Val_{l_{i+1}})$, 
with $Val_{l_{i+1}}(n) \!=\! Val_{l_{i}}(n) +1$.  

\item [{\bf termination}, [{\bf low}$\to${\bf end}]\hskip -1ex] 
By induction we have seen that $\forall i\IN\mathds{N}$, 
 $(\sigma[l_i],\text{\bf low}, Val_{l_i})\xrightarrow{*} (\sigma[l_{i+1}],\text{\bf low}, Val_{l_{i+1}})$. Thus on the $(N-1)$-th iteration 
 $(\sigma[l_{N-1}],\text{\bf low}, Val_{l_{N-1}})\xrightarrow{*} (\sigma[l_{N}],\text{\bf low}, Val_{l_{N}})$ with  $Val_{l_N}(n)=N$ which enables  
{\bf low}$\xrightarrow{\sharp,(n\!=\! N ),\emptyset}${\bf end}, hence 
 $(\sigma[l_{N}],\text{\bf low}, Val_{l_{N}})\xrightarrow{*} (\sigma[l_{N}],\text{\bf end}, Val_{l_{N}})$ and 
 $\sigma_A$ is accepted.

\end{description}

\flushright{$\blacksquare$}

\end{proof}

\paragraph{HASL expressions associated to ${\cal A}_{per}$.}
We  define different HASL expressions to be associated to to automaton ${\cal A}_{per}$. 

\begin{itemize}
\item $Z_1\equiv E[last(\bar{t}_p)]$: corresponding to  the mean value of the period duration  for the first $N$  detected periods. 
\item $Z_{2}\equiv PDF(\bar{t}_p,s, l, h)$: corresponding to the
  PDF of the average period duration over the first $N$  detected periods, where $[l,h]$ represents  the  considered support 
of the estimated PDF, and $[l,h]$ is  discretized into uniform subintervals of width $s$  
\item $Z_3\equiv E[last(s^2_{t_p})]$: corresponding to the   fluctuation
  of the period duration. 
\end{itemize}




Expression $Z_1$ represents the expected value assumed by variable $\bar{t}_p$, that is, the average 
duration of the first $N$  periods  detected along a trace, at the end of 
accepted trajectory (i.e., a trajectory that contains $N$ periods). 
Similarly expression $Z_{2}$  evaluates the PDF of  
the average duration of the first $N$  periods by assuming the interval $[l,h]$ as the  support of the PDF 
 and considering that $[l,h]$ is discretised in $(h-l)/s$ uniform subintervals of width $s$. On the other hand $Z_3$
is concerned with  assessing the expected value that variable $s^2_{t_p}$ has 
at the end of a trace consisting of $N$ noisy periods. By definition 
(see Table~\ref{tab:lhavars2})  $s^2_{t_p}$  corresponds to the 
 \emph{fluctuation} of the  duration of the detected periods, (i.e.,  how much the $N$ periods detected along a trace differ from their average  duration). 
Observe that  the measured period fluctuation  (i.e. $Z_3$) provides us
with a useful measure of the \emph{irregularity}, from the point of 
view of the period duration,  of the observed oscillation. 

\subsection{Measuring the peaks of oscillations}

In the previous section we have seen how 
a characterisation of periodicity for stochastic oscillation can be 
obtained by considering a given partition, induced by two thresholds $L,H$,  of the domain of the observed population. 
The drawback of such a characterisation is that, the detected periods depend  on the chosen  $L,H$ thresholds, and these  
have to be chosen by the modeller manually, i.e., normally by looking at the shape of a sampled trajectory 
and then choosing where to ``reasonably''  set the $L$ and $H$ values before executing the 
measurements with automaton ${\cal A}_{per}$. 
To improve things here we propose a different approach 
which is aimed at identifying where the peaks (i.e., the local maxima/minima) of 
 oscillatory  traces are located. 

  Since traces of a DESP consist of discrete  increments/decrements of at least one unit,  
it is up to the observer to establish what should be accounted for as a local maximum (minimum) during such detection process. 
Intuitively  a local max/min of a trace $\sigma_A$ (the projection of $\sigma$ w.r.t.  the observed species  $A$) 
 is a state $\sigma_A[i]$ ($i\!\in\!\mathds{N}$) that corresponds to a change of trend in the population of $A$. This is 
 formally captured by the following definition. 
\begin{definition}[local maximum/mininimum of a trace]
\label{def:maxmin}
For $\sigma_A$ the $A$ projection of a trace $\sigma$    of an $n$-dimensional DESP ${\cal D}$ population model,  
 state $\sigma_A[i]$ is a maximum,  if  $\sigma_A[i-1]\!<\!\sigma_A[i]\!>\! \sigma_A[i+1] $, 
or a minimum,  if $\sigma_A[i-1]\!>\!\sigma_A[i]\!<\! \sigma_A[i+1] $. 
\end{definition}

\begin{figure}[ht]
\centering
\includegraphics[scale=.65]{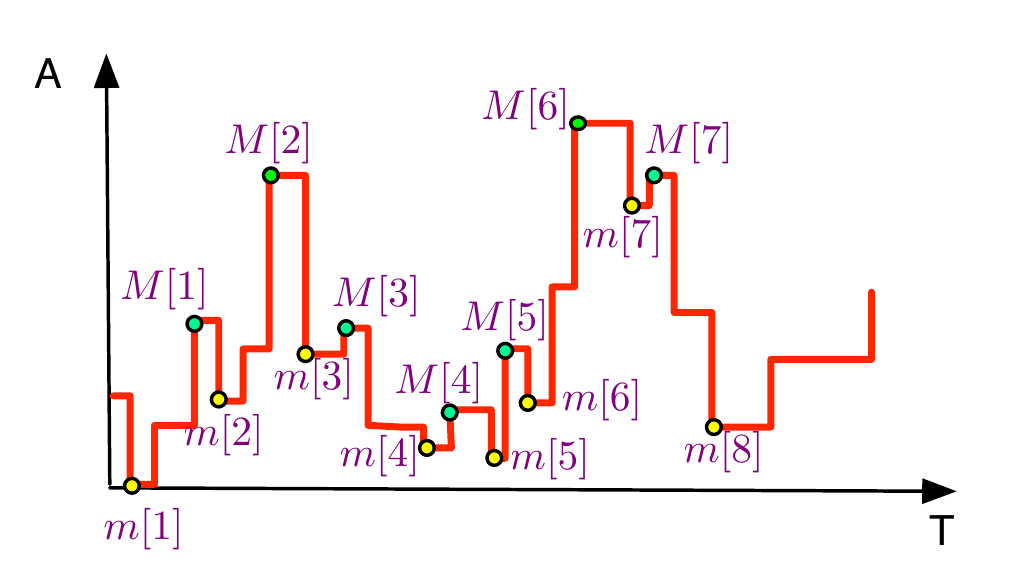}
\caption{Example of  local maxima/minima of an alternating trajectory $\sigma_A$.} 
\label{fig:simpleminmaxtrace}
\end{figure}

In the remainder we refer to a trace that consists of an infinite sequence of local maxima interleaved with 
an infinite sequence of local minima as an \emph{alternating trace} (Definition~\ref{def:alternatingtrace}). 
\begin{definition}[alternating trajectory]
\label{def:alternatingtrace}
A trajectory $\sigma$   of an $n$-dimensional DESP ${\cal D}$ population model is said \emph{alternating} with respect to the $i^{th}$ 
($1\!\leq i\!\leq\! n$)   observed species  of ${\cal D}$, if $\sigma_i$  contains infinitely many 
local minima (or equivalently local maxima).  
\end{definition}

For $\sigma_A$ an alternating trace we denote $\sigma^M_A\!=\! M[1],M[2],\ldots$, respectively $\sigma^m_A\!=\! m[1],m[2],\ldots$, 
the projection of  $\sigma_A$ consisting of the local maxima, respectively minima, of $\sigma_A$. 
Figure~\ref{fig:simpleminmaxtrace} shows the local maxima and minima for an example of alternating trace $\sigma_A$. 
In the following we point out two simple properties relating the definition of noisy periodic and alternating trace.

\begin{proposition}
\label{prop:noisyper}
If $\sigma_A$ is  a \emph{noisy periodic}  trace (as of  Definition~\ref{def:noisyperiodicity}) 
then it is also  \emph{alternating}. 
Observe however that the opposite is not necessarily true, in fact an alternating  trace  
may diverge, in which case it does not oscillate. 
\end{proposition}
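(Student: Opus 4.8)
The plan is to prove the implication directly, by exhibiting infinitely many local maxima inside any noisy periodic trace, and to settle the ``opposite is not true'' clause with an explicit drifting-sawtooth counterexample. First I would unfold Definition~\ref{def:noisyperiodicity}: if $\sigma_A$ is noisy periodic w.r.t. levels $L\!<\!H$, then the regions $low=(-\infty,L)$ and $high=[H,\infty)$ are each entered infinitely often, so I can extract an infinite index sequence $i_1\!<\!j_1\!<\!i_2\!<\!j_2\!<\!\cdots$ with $\sigma_A[i_k]\in low$ and $\sigma_A[j_k]\in high$ for all $k$ (just interleave a $low$-visit, a later $high$-visit, a later $low$-visit, and so on). Fixing $k$ and looking at the finite window $W_k=\{i_k,\dots,i_{k+1}\}$, the trace starts strictly below $L$, reaches at least $H$ at index $j_k$, and returns strictly below $L$; hence its maximum $v_k=\max_{i\in W_k}\sigma_A[i]$ satisfies $v_k\!\ge\!H\!>\!L\!>\!\max(\sigma_A[i_k],\sigma_A[i_{k+1}])$ and is therefore attained at some \emph{interior} index $p_k$ with $i_k\!<\!p_k\!<\!i_{k+1}$.

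The second step is to argue that such an interior maximal index $p_k$ is a local maximum in the sense of Definition~\ref{def:maxmin}, i.e. $\sigma_A[p_k-1]\!<\!\sigma_A[p_k]\!>\!\sigma_A[p_k+1]$. Maximality already gives $\sigma_A[p_k\pm 1]\!\le\!\sigma_A[p_k]$; to upgrade these to strict inequalities I use that each recorded step genuinely changes the observed population (equivalently, I work on the stutter-free projection of $\sigma_A$), so $\sigma_A[p_k-1]\!\neq\!\sigma_A[p_k]\!\neq\!\sigma_A[p_k+1]$. Since the windows $W_k$ are disjoint except for shared endpoints and each yields one interior local maximum $p_k$ (and the $p_k$ are pairwise distinct because $i_k\!\to\!\infty$), $\sigma_A$ has infinitely many local maxima, hence is alternating by Definition~\ref{def:alternatingtrace}. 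For the non-equivalence I would exhibit $\sigma_A$ defined by $\sigma_A[0]=0$ and, for $n\!\ge\!1$, $\sigma_A[2n-1]=2n$, $\sigma_A[2n]=2n-1$ (that is, $0,2,1,4,3,6,5,\dots$): each odd-indexed entry is a local maximum and each positive even-indexed entry a local minimum, so the trace is alternating, yet $\sigma_A[i]\!\to\!\infty$, so for every fixed $L$ the region $low$ is visited only finitely often and $\sigma_A$ fails to be noisy periodic for any $L\!<\!H$ --- echoing the earlier remark, after Spieler, that a diverging trace does not oscillate.

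The hard part is the plateau issue flagged in the second step: Definition~\ref{def:maxmin} demands strict inequalities on both sides, so a ``period'' whose ascent ends in a flat top of length $\ge 2$ before the descent contains no index satisfying the definition, and the interior-maximum argument would stall on that window. This is exactly why the proof needs the (implicit) hypothesis that the observed species actually changes at each recorded transition, or else a mild relaxation of Definition~\ref{def:maxmin} that recognises plateau extrema; under either reading the argument above goes through unchanged. Everything else --- extracting the index sequence, the bound on $v_k$, and the sawtooth verification --- is routine.
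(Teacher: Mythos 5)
Your proof is correct and follows essentially the same route as the paper, which disposes of the proposition in one sentence: infinitely many visits to the $low$ and $high$ regions force infinitely many interleaved local maxima and minima, and the divergence remark is handled exactly as in Corollary~\ref{cor:altern} (your sawtooth $0,2,1,4,3,6,5,\dots$ is a concrete instance of the paper's ``increasing steps always larger than the decreasing ones''). Your additional care about the strict inequalities in Definition~\ref{def:maxmin} --- the plateau/stuttering caveat, resolved by passing to the stutter-free projection --- is a genuine refinement that the paper's informal argument silently skips over, and is worth keeping.
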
 
Proposition~\ref{prop:noisyper} is trivially true as by definition a noisy period  trace 
visit infinitely often the $low$ and $high$ region of the state space, thus necessarily 
it contains an infinite sequence local maxima interleaved with  local minima. 

\begin{corollary}
\label{cor:altern}
If $\sigma_A$ is an \emph{alternating} trace (as of  Definition~\ref{def:noisyperiodicity}) 
then it is not necessarily  \emph{noisy periodic}. 
\end{corollary}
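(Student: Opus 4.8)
The plan is to prove the corollary by exhibiting a single counterexample: a trace that is alternating in the sense of Definition~\ref{def:alternatingtrace} yet fails to be noisy periodic in the sense of Definition~\ref{def:noisyperiodicity}. As already anticipated in Proposition~\ref{prop:noisyper}, the key observation is that a \emph{divergent} trace can still be alternating, whereas noisy periodicity forces the low region $(-\infty,L)$ to be re-entered infinitely often, which a divergent trace cannot do.

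Concretely I would take the one-dimensional population trace $\sigma_A$ given by the repeating pattern ``two unit increments followed by one unit decrement'', i.e.\ the state sequence $0,1,2,1,2,3,2,3,4,3,4,5,\ldots$ (realisable, for instance, as the sequence of states of a simple birth--death DESP population model on $\mathbb{N}$ with one $+1$ event and one $-1$ event). First I would verify that this trace is alternating: applying Definition~\ref{def:maxmin}, each state at which the pattern switches from increasing to decreasing satisfies $\sigma_A[i-1]<\sigma_A[i]>\sigma_A[i+1]$ and is therefore a local maximum, and each state at which it switches from decreasing to increasing satisfies $\sigma_A[i-1]>\sigma_A[i]<\sigma_A[i+1]$ and is therefore a local minimum; hence $\sigma_A$ contains infinitely many local maxima and minima and is alternating.

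Then I would show that $\sigma_A$ is not noisy periodic. The subsequence of its local maxima is $2,3,4,\ldots$ and the subsequence of its local minima is $1,2,3,\ldots$, both of which tend to $+\infty$, so $\sigma_A[i]\to\infty$ as $i\to\infty$. Consequently, for any thresholds $L<H$ in $\mathbb{N}$ there is an index beyond which every state of $\sigma_A$ lies in $high=[H,\infty)$; in particular $\sigma_A$ enters $low=(-\infty,L)$ only finitely many times, which violates the requirement of Definition~\ref{def:noisyperiodicity} that the regions $low$, $mid$ and $high$ all be visited infinitely often. Therefore $\sigma_A$ is alternating but not noisy periodic, which establishes the corollary.

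I do not expect any real obstacle here; the only point deserving a moment's care is to present the counterexample as a genuine DESP trajectory, so that the statement is not vacuous, and this is immediate for the birth--death model above. (A bounded variant works equally well: any trace that oscillates forever inside $mid\cup high$ and never returns to $low$ is alternating but not noisy periodic, which also shows that the failure is not specific to divergence.)
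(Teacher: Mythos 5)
Your proposal is correct and follows essentially the same route as the paper, which justifies the corollary by observing that an alternating trajectory may diverge (e.g.\ when the increasing steps always exceed the decreasing ones) and a diverging trace cannot re-enter the $low$ region infinitely often. Your contribution is merely to instantiate that remark with an explicit birth--death trace and to verify both properties in detail, which is a welcome but not substantively different argument.
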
 
Corollary~\ref{cor:altern} simply points out that, by definition, an alternating trajectory may be diverging 
(for example if it consist of increasing steps which are always larger than the decreasing ones), in which 
case clearly  it is not noisy periodic.

In the remainder we introduce a HASL based procedure for detecting the local maxima and local minima of 
alternating traces. 
However rather than considering  detection of ``simple'' local maxima/minima as of Defintion~\ref{def:maxmin}, 
we refer to detection of a generalised notion of local maxima/minima of a trace, that is, 
maxima and minima which are distanced, at least ,by a certain value $\delta$. We formalise this notion 
in the next definition. 

\begin{definition}[$\delta$-separated local maxima]
\label{def:maxmindelta}
Let $\delta\!\in\!\mathds{R}^+$, and $\sigma_A$ the $A$ projection of a trace $\sigma$    of an $n$-dimensional DESP ${\cal D}$ population model,  a state $\sigma_A[i]$ is the $j$-th, $j\!\in\!\mathds{N}_{> 0}$,   $\delta$-separated local maximum (minimum), denoted $M_{\!\delta}[j]$ ($m_{\!\delta}[i]$), 
if it is the largest  local maximum (minimum) 
whose distance from the preceding local minimum $m_{\!\delta}[j-1]$ (maximum $M_{\!\delta}[j-1]$)  is at least $\delta$. 
\end{definition}
For $\sigma_A$ an alternating trace we denote \\
$\sigma^{M_\delta}_A\!=\! M_\delta[1],M_\delta[2],\ldots$, respectively \\
$\sigma^{m_\delta}_A\!=\! m_\delta[1],m_\delta[2],\ldots$, 
the projection of  $\sigma_A$ consisting of the $\delta$-separated local maxima, respectively minima, of $\sigma_A$. 

\begin{figure}[ht]
\centering
\includegraphics[scale=.65]{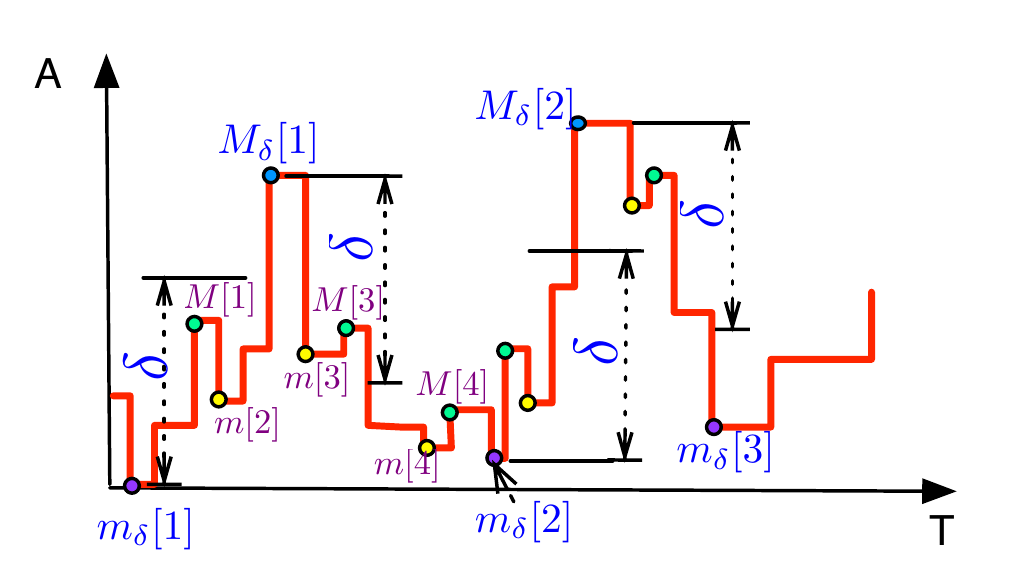}
\caption{Example of  $\delta$-separated local maxima ($M_\delta[i]$) and minima ($m_\delta[i]$) of an alternating trajectory $\sigma_A$}
\label{fig:minmaxtrace}
\end{figure}
Figure~\ref{fig:minmaxtrace} shows the $\delta$-separated local max/min for the same  trace $\sigma_A$ of 
Figure~\ref{fig:simpleminmaxtrace}. 
Observe that the  $\delta$-separated  max/min (Figure~\ref{fig:minmaxtrace}) 
are a subset of the ``simple'' max/min  (Figure~\ref{fig:simpleminmaxtrace}). 
Furthermore the following property holds: 
\begin{property}
For $\delta\!=\! 1$ the sequence of $\delta$-separated  maxima (minima) 
of an alternating trace $\sigma_A$ coincides  
with the list of local  maxima (minima), that is: $\sigma^{M_1}_A\!=\! \sigma^{M}_A$ and 
$\sigma^{m_1}_A\!=\! \sigma^{m}_A$. 
\end{property}

The detection  of  the $\delta$-separated local maxima (minima) for a trace $\sigma_A$ can be described 
in terms of an iterative procedure through which the list of detected max/min 
are constructed as $\sigma_A$ unfolds. 
Such a procedure is formally implemented by the LHA ${\cal A}_{peaks}$ (Figure~\ref{fig:LHApeaks}) which we 
introduce later on. 
Here, based on the example illustrated in  Figure~\ref{fig:minmaxtrace}, we informally summarise how detection   of $\delta$-separated max/min  works. 
The detection requires storing of the most recent (temporary) $\delta$-separated max (min) into a variable named $x_M$ ($x_m$),  
while once detection of a  $\delta$-separated maximum (minimum) is completed 
the corresponding variable $x_M$ ($x_m$) is copied into a dedicated list, named $Lmax$, resp. $Lmin$ (see Table~\ref{tab:lhavars_peaks}), 
which contains the detected points. 
To understand how detection works let us consider  the trace in Figure~\ref{fig:minmaxtrace}. 
The first element encounterd is the local minimum $m[1]$ which is then stored into  $x_m=m[1]$. 
 As the trace further unfolds the subsequent maxima are ignored as long as  their distance 
 from the temporary minimum   $x_m$  is less than $\delta$, as is the case with $M[1]$. 
 Similarly any local  minimum $m[i]$ that is encountered after that stored in $x_m$ is ignored (e.g., $m[2]$), unless it is smaller than $x_m$,  in which case $x_m$ is updated with the newly found smaller minimum. 
 As $\sigma_A$ unfolding proceeds we find the next local max $M[2]$ which is distant more than $\delta$ from the temporary 
 minimum $x_m$: this means that $x_m$ currently holds an actual $\delta$-distanced minimum hence its value is appended 
 to $Lmin$ and the procedure starts over, in a symmetric fashion, for the detection of 
 the next maximum.

The rational behind the notion of $\delta$-separated max/min is that 
for locating  the actual peaks of a stochastically   oscillating trace 
it is important to be able to distinguish between  the  minimal peaks corresponding to stochastic noise, 
 the actual peaks of oscillation. With the $\delta$-separated max/min characterisation 
 we provide the modeller with a means to establish  an \emph{observational perspective}: 
by choosing a specific value for the $\delta$ parameters the modeller 
establishes how big a level of noise he/she wants to ignore 
when detecting where the oscillation peaks are located. 

In the remainder we introduce the LHA ${\cal A}_{peaks}$ which 
formally implements the detection of the $\delta$-separated peaks of 
alternating traces.

\begin{figure*}[ht]
\centering
\fbox{
\includegraphics[scale=.4]{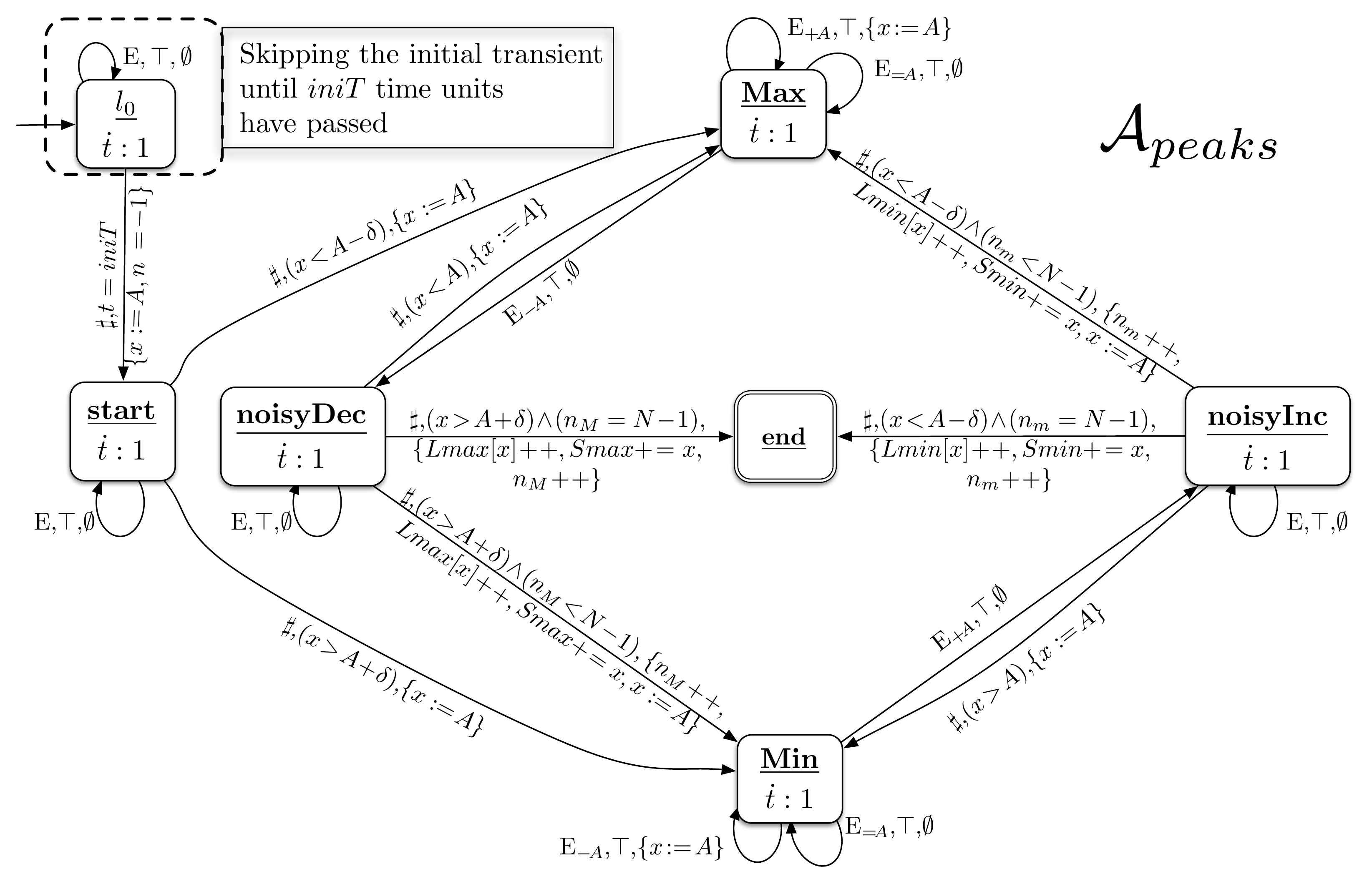}
}
\caption{${\cal A}_{peaks}$: an LHA for detecting local maxima/minima (for observed  species $A$) of  noisy periodic traces where local maxima/minima are detected with respect to a chosen level of noise $\delta$. }
\label{fig:LHApeaks}
\end{figure*}

\ignore{
 \begin{table*}
 \begin{center}
     \begin{tabular}{ | c | c | c | p{4cm} |}
    \hline
    \multicolumn{4}{|c|}{Data variables} \\
    \hline
    name & domain  & update definition & description  \\ \hline
    
        $t$ & $\mathbb{R}_{\geq 0}$ & \emph{reset} & time elapsed since beginning measure (first non-spurious period) \\ \hline
      $n$ & $\mathbb{N}$ & \emph{increment} & counter of  detected local maxima/minima \\ \hline
      $up$ & bool & \emph{complement} & boolean flag indicating whether measuring started with detection of a max or min \\ \hline
    
     $x$ & $\mathbb{N}$ &  \emph{current value of observed species} $A$  & (overloaded) variable storing most recent detected maximum/minumum  \\ \hline
   
    $Smax (Smin)$ & $\mathbb{N}$ &    & sum of detected maxima (minima) \\ \hline
    
    \end{tabular}
\end{center}
    \caption{ The data variables  of automata ${\cal A}_{peaks}$ of Figure~\ref{fig:LHApeaks} for locating the peaks of a noisy oscillatory traces}
    \label{tab:lhavars_peaks}
    \end{table*}
}

 \begin{table*}
 \begin{center}
     \begin{tabular}{ | c | c | p{3cm} | p{4cm} |}
    \hline
    \multicolumn{4}{|c|}{Data variables} \\
    \hline
    name & domain  & update definition & description  \\ \hline
    
        $t$ & $\mathds{R}_{\geq 0}$ & \emph{reset} & time elapsed since beginning measure (first non-spurious period) \\ \hline
      $n_{M} (n_{m})$ & $\mathds{N}$ & \emph{increment} & counter of  detected local maxima ($n_M$), minima ($n_m$) \\ \hline
    
     $x$ & $\mathds{N}$ &  \emph{current value of \emph{observed species} $A$}   & (overloaded) variable storing most recent detected maximum/minumum  \\ 
      \hline
   
    $Smax (Smin)$ & $\mathds{N}$ &    & sum of detected maxima (minima) \\ \hline

$Lmax[] (Lmin[])$& $\mathds{N}^n$ &    & array of frequency of heights of  detected maxima (minima) \\ \hline
    
    \end{tabular}
\end{center}
    \caption{ The data variables  of automaton ${\cal A}_{peaks}$ of Figure~\ref{fig:LHApeaks} for locating the peaks of a noisy oscillatory traces}
    \label{tab:lhavars_peaks}
    \end{table*}

\paragraph{The automaton ${\cal A}_{peaks}$.} 
We introduce an LHA, denoted ${\cal A}_{peaks}$ (Figure~\ref{fig:LHApeaks}),  designed for detecting $\delta$-distanced local maxima/minima along  alternating traces of a given observed species called $A$. 
 It requires a parameter $\delta$ (the chosen noise level) and 
 the partition of the event set $E\!=\!E_{+\!A} \!\cup\! E_{-\!A} \!\cup\! E_{=\!A}$ where $E_{+\! A}$ (respectively $E_{-\! A}, E_{=\! A}$) is the set of events resulting in an increase (respectively decrease, no effect) of the population of $A$. 
 
 The rationale behind the  structure of ${\cal A}_{peaks}$ is to mimic  the cyclic 
structure of an alternating trace through a loop of four locations, two of which (i.e. {\bf\underline{Max}}  and {\bf\underline{Min}})  
are targeted to the detection of local maxima, resp. minima. The simulated trace yields the automaton to 
loop between  {\bf\underline{Max}}  and {\bf\underline{Min}}  hence registering 
the minima/maxima  while doing so. 
The detailed behavior of ${\cal A}_{peaks}$ is as follows. 
Processing of a trace starts with  a configurable  filter of the initial transient 
(represented as a box in Figure~\ref{fig:LHApeaks}) through which  a simulated trace is simply let unfolding for a given $initT$ duration.
The actual analysis begins in location {\bf\underline{start}} 
from which we move to either  {\bf\underline{Max}}  or {\bf\underline{Min}} depending whether we initially 
observe an increase (i.e.  $x<A\!-\!\delta$) or a decrease  (i.e.   $x>A\!+\!\delta$) of the  population of the observed species $A$ 
beyond the chosen level of noise $\delta$. 
Once  within the {\bf\underline{Max}}$\to${\bf\underline{noisyDec}} $\to${\bf\underline{Min}}$\to$ {\bf\underline{noisyInc}} loop the detection 
of local maxima and minima begins. Location  {\bf\underline{Max}} ({\bf\underline{Min}}) is entered  from 
{\bf\underline{noisyInc}}  ({\bf\underline{noisyDec}}) each time 
a sufficiently large (w.r.t. $\delta$) increment (decrement) of $A$ is observed. On  entering  {\bf\underline{Max}} ({\bf\underline{Min}}),
we are sure that the current value of $A$ has moved up (down) of at
least $\delta$ from  the last value stored in $x$ while in
{\bf\underline{Min}} ({\bf\underline{Max}}), hence that value ($x$) is
an actual local minimum (maximum) thus we add it up to $Smin$
($Smax$), then we increment the frequency counter corresponding to the
level of the detected minimum $Lmin[x]$ (maximum $Lmax[x]$)\footnote{with a slight abuse of notation 
we refer to $Lmin[]$ and $Lmax[]$ as arrays whereas in reality within COSMOS/HASL they 
correspond to a set of variables $Lmin_i$, $Lmax_j$, each of which is associated to a given level of the observed population, 
thus $Lmin_1$ counts the frequency of observed minimum at value 1, $Lmin_2$ the observed minima 
at value 2 and so on. 
The number of required $Lmin_i$, $Lmax_j$ variables, which is potentially infinite, can be 
actually bounded without loss of precision to a sufficiently large value $Lmin_m$ ({\em resp.} $Lmax_m$) which must be established manually  
beforehand, for example by observing few previously generated traces.  
}
  before storing the new value of $A$ in $x$ 
and finally increase $n_M$ ($n_m$) the counter of detected maxima (minima). 
Once in {\bf\underline{Max}} ({\bf\underline{Min}}) we stay there as long as we observe the occurrence of 
reactions which do not decrease (increase) the value of $A$, hence either a  reaction of  $E_{+\!A}$ ($E_{-\!A}$), in which case we also 
store the new increased (decreased) value of $A$, hence a potential next local maximum (minimum)  in $x$, or one of $E_{=\!A}$.  
On the other hand on occurrence of a ``decreasing'' (``increasing'') reaction $E_{-\!A}$ ($E_{+\!A}$) 
we move to  {\bf\underline{noisyDec}} ({\bf\underline{noisyInc}}) from which we can either move back to {\bf\underline{Max}} ({\bf\underline{Min}}), 
if we observe a new increase (decrease) that makes the population of
$A$ overpass $x$ ($x$ overpass $A$), or eventually entering  {\bf\underline{Min}} ({\bf\underline{Max}}) 
as soon as the observed decrease (increase) goes beyond the chosen $\delta$ (see above). 
For the automaton ${\cal A}_{peaks}$ depicted in Figure~\ref{fig:LHApeaks}, the  analysis of the simulated trace ends, 
by entering the   {\bf\underline{end}} location either from {\bf\underline{noisyDec}} or {\bf\underline{noisyInc}}, 
as soon as $N$ maxima (or  minima, depending on whether the first observed  peak was a maximum or a minimum) have been detected. 
Notice that ${\cal A}_{peaks}$ can straightforwardly be adapted to different ending conditions. 
The data variables of ${\cal A}_{peaks}$ are summarised  in Table~\ref{tab:lhavars_peaks}. \\
 
\ignore{
Similarly to ${\cal A}_{per}$ the synchronisation  begins with a filter of initial transient where simply simulation goes on for a given $initT$ duration without any 
analysis being performed. The actual processing starts then in location \underline{\bf start} on entering of which variable $x$ (which is used to store alternatively the  
most recently detected maximum or minimum) is initialised with the current value of $A$. From \underline{\bf start} the 
automaton move  in \underline{\bf Min} (\underline{\bf Max}) if a decrease (increase) of the level of $A$ of at least $\delta$ 
(with respect to the level it was on entering of \underline{\bf start}) is observed. We keep track of which between \underline{\bf Min} and \underline{\bf Max} 
is entered from \underline{\bf start} by setting the boolean flag $Up$ (to $\top$ if \underline{\bf Max} is entered from \underline{\bf start}, to $\bot$ otherwise). 
Furthermore we store the potentially new minimum (maximum) found in variable $x:=A$. 
Once in location \underline{\bf Min} the behaviour of the automaton depends on the type of observed event. 
If an event $e\!\in\! E_{+\!A}$   is observed then  location \underline{\bf noisyInc} 
is entered indicating that $A$ has increased  although the increase has not (yet) exceeded  $\delta$ (with respect to 
the most recent detected minimum stored in $x$). 
On the other hand   if an event $e\!\in\! E_{-\!A}$    is observed, 
then the current value of $A$ is below the previously detected minimum hence the newly found (potential) minimum is stored in $x$,  $x:=A$. Finally 
an occurrence of any event $e\!\in\! E_{=\!A}$ 
is simply ignored. 
From \underline{\bf noisyInc} the processing of input trace may lead  back to \underline{\bf Min} if 
$A$ re-decreases below $x$ (hence requiring the new minimum to be   updated to $x:=A$) or it 
may lead to  \underline{\bf Max} as soon as $A$ has increased above the noise level (i.e. $x\!>\!A\!-\!\delta$). 
The transition \underline{\bf noisyInc} to \underline{\bf Max} means that the value currently stored in $x$ 
corresponds to   an actual  minimal peak hence it is added up in $Smin$ while the the counter of detected (minimal) peaks, i.e. $n$, 
is incremented (if the counting had started from a minimal peak). 
The detection of maximal peaks is obtained  with an identical (but dual)  procedure which characterises 
the remaining part of the automaton (i.e. locations \underline{\bf Max}, \underline{\bf noisyDec} and their transitions). 
The synchronisation stops by entering \underline{\bf end} location on detection of  the $N$-th minimal (or maximal) peak. 
The data variables of ${\cal A}_{peaks}$ are summarised  in Table~\ref{tab:lhavars_peaks}. 
}

\paragraph{HASL expressions associated to ${\cal A}_{peaks}$.}
We  define different HASL expressions to be associated to to automaton ${\cal A}_{peaks}$. 

\begin{itemize}
\item $Z_{max}\equiv E[last(Smax)/n_M]$: corresponding to  the expected
  value of the average height of the maximal peaks for the first $N$  detected maxima. 
\item $Z_{min}\equiv E[last(Smin)/n_m]$: same as $Z_{max}$ but for minima. 
\item$Z_{PDFmax}\equiv E(last(Lmax)/n_M)$:
    enabling to compute 
  the PDF of  the  height (along a path) of the maximal peaks   
\item $Z_{PDFmin}\equiv E(last(Lmin)/n_m)$:  enabling to
compute the PDF of  the  height (along a path) of the maximal peaks   
\end{itemize} 


\noindent
Expression $Z_{max}$ ($Z_{min}$)  represents  the 
average value of the detected $\delta$-separated maxima (minima). This is obtained by considering  the sum of all detected $\delta$-separated local maxima (minima), which is stored in $Smax$ ($Smin$) and dividing it  by the number of detected maxima $n_m$ ($n_m$). 
Expression $Z_{PDFmax}$ ($Z_{PDFmin}$) allows to estimate the PDF of the height of the detected $\delta$-separated local maxima (minima). 
This is achieved by dividing  the frequency counters of each detected maximal (minimal) peak's height, whose values are  stored in  array $Lmax$ ($Lmin$),    by $n_M$ ($n_m$), the number of detected 
maxima  (minima). 
\ignore{
\paragraph{Discussion.} A limitation of the methodology  based on \emph{noisy periodicity} described in the previous section 
is that the resulting measures are affected by the chosen  partition of the  domain of the observed species  (i.e. the  $L<H$ parameters 
of  ${\cal A}_{per}$). 
The effect that  parameters $L\!<\!H$ have on the measure  depends not only on the distance $H\!-\! L$ but also on the location of 
$L$ and $H$ with respect to the ``centre" of oscillation.
Simply speaking ${\cal A}_{per}$ can be seen as a filter for measuring oscillating traces of given minimal amplitude $H\!-\! L$ and whose peaks 
are located above $H$ and below $L$. 
Thus through ${\cal A}_{per}$ we leave the user the responsibility to decide the minimal amplitude (and peaks location) of oscillatory traces 
whose period  he/she wants to measure. 
}

\section{Case study}
\label{sec:circlock}
\noindent
To demonstrate the above described procedure we consider  a popular example of oscillator, the so-called 
circadian clock.  Circadian clocks are biological mechanisms responsible  for keeping track of daily cycles of light and darkness. 
Here we focus on a model of the biochemical network  (\cite{VKBL02}) 
which is believed to be at the basis of the control of circadian clocks. The network (Figure~\ref{fig:circClock}) involves 2 genes, $D_A$ which expresses the {\it activator} protein $A$ 
\begin{figure}[htbp]
\begin{center}
\includegraphics[width=0.65\linewidth]{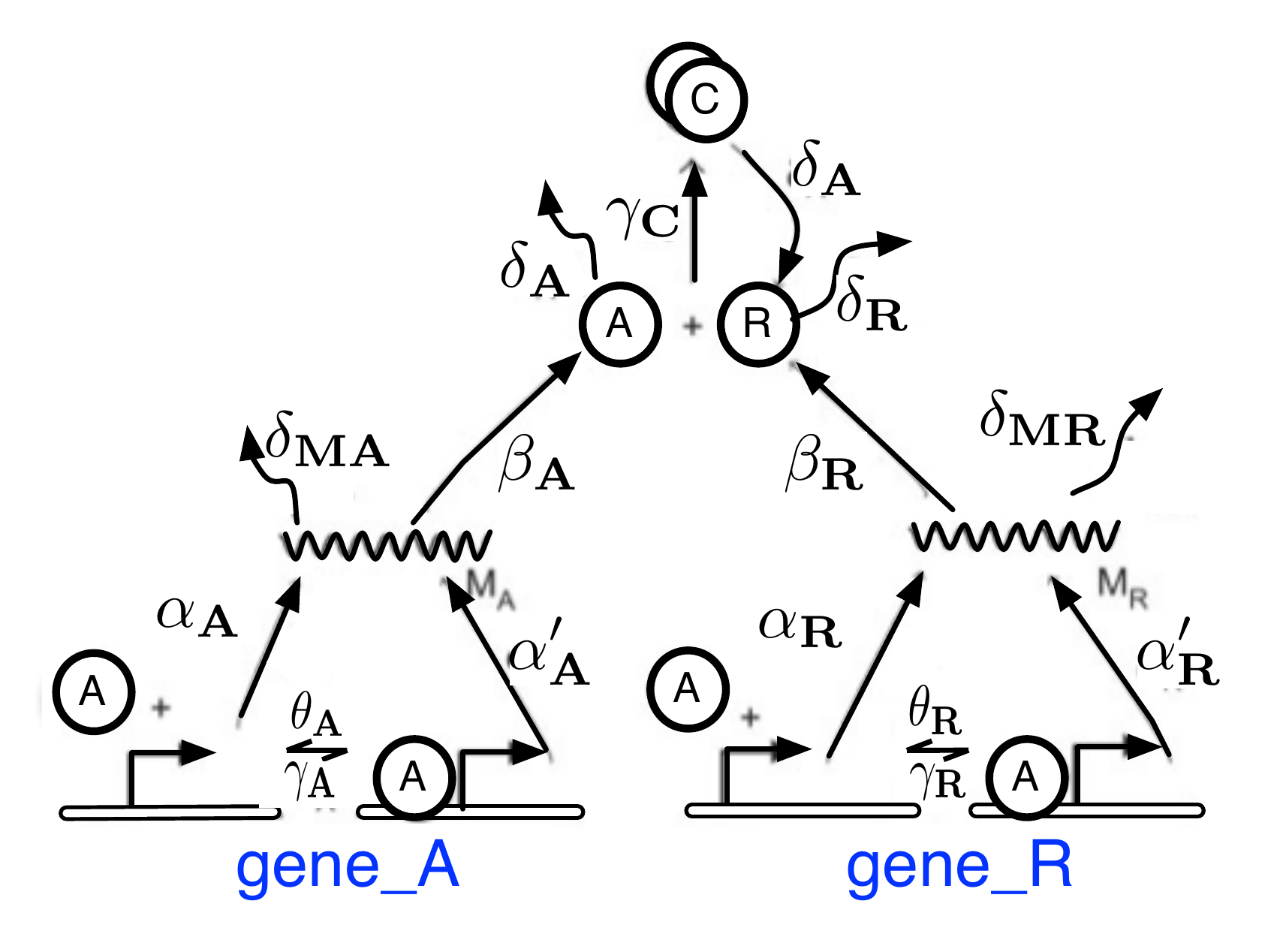}
\caption{Circadian Clock oscillator network: gene $D_A$ expresses activator protein $A$ through transcription of mRNA $M_A$, while gene $D_R$ expresses the repressor protein $R$ 
through transcription of mRNA $M_R$. 
}
\label{fig:circClock}
\end{center}
\end{figure}
and $D_R$ which expresses the {\it repressor} protein $B$.  Protein expression is a two steps process: 
in the first phase a gene transcribes a messenger RNA (mRNA) molecule; in the second phase 
the mRNA molecule is translated into the target protein. For the model of circadian clock we consider here we denote  $M_A$, 
the mRNA species transcribed by gene $D_A$, and   $M_R$ the mRNA transcribed by gene $D_R$. $M_A$ and $M_R$ are then translated into proteins $A$, respectively $R$ .

\begin{small}
\begin{equation}
\label{eq:circclock}
\begin{aligned}
R_1: & A + D_A  \stackrel{\gamma_A}{\longrightarrow}   D'_A &\hspace{.5cm} & R_9:  M_A   \stackrel{\beta_A}{\longrightarrow}   M_A+ A  \\
R_2: & D'_A  \stackrel{\theta_a}{\longrightarrow}   A + D_A &\quad &  R_{10}:  M_R   \stackrel{\beta_R}{\longrightarrow}   M_R+ R \\
R_3: & A + D_R  \stackrel{\gamma_R}{\longrightarrow}  D'_R &\quad &  R_{11}: A+R   \stackrel{\gamma_C}{\longrightarrow} C\\
R_4: & D'_R  \stackrel{\theta_R}{\longrightarrow}  D_R + A &\quad &  R_{12}: C   \stackrel{\delta_A}{\longrightarrow} R \\
R_5: & D'_A  \stackrel{\alpha'_A}{\longrightarrow}  M_A + D'_A  &\quad &  R_{13}: A   \stackrel{\delta_A}{\longrightarrow} \emptyset \\
R_6: & D_A  \stackrel{\alpha_A}{\longrightarrow}  M_A + D_A  &\quad &  R_{14}: R   \stackrel{\delta_R}{\longrightarrow} \emptyset \\
R_7: & D'_R  \stackrel{\alpha'_R}{\longrightarrow}  M_R + D'_R  &\quad &  R_{15}: M_A   \stackrel{\delta_{M_A}}{\longrightarrow} \emptyset \\
R_8: & D_R  \stackrel{\alpha_R}{\longrightarrow}  M_R + D_R  &\quad & R_{16}: M_R   \stackrel{\delta_{M_R}}{\longrightarrow} \emptyset \\
\end{aligned}
\end{equation}
\end{small}    

Protein $A$ acts as an activator for  both genes by attaching to  promoter region of $D_A$ and  $D_R$ (i.e. when $A$ is attached to a gene 
the mRNA transcription increases). Species $D'_A$ and $D'_R$ 
represent the state of gene $D_A$, respectively $D_B$, when an activator molecule ($A$) is attached to their promoter.  
 Note that gene $D_R$ acts as a repressor of  $D_A$ since  
when $A$ bounds to its promoter  $D_R$   sequesters the activator $A$ and, as  a result, the transcription of $D_A$ slows down. 
The repressing role of  $D_R$ is further due to the fact that the expressed protein $R$ inactivates the activator $A$ by binding to it and forming 
the complex $C$. Finally the model in Figure~\ref{fig:circClock} accounts for degradation of all species: thus the mRNAs  $M_A$ and $M_R$, as well as the expressed proteins $A$ and $B$ 
degrades with given rates (see Table~\ref{tab:rates}).  Notice that proteins $A$ degrades also when attached to $R$ (i.e. when in complex $C$), and, as a consequence, $C$ turns into $R$ 
at a rate equivalent to the degradation rate of $A$. 

\begin{table*}
\begin{center}
\begin{small}
\begin{tabular}{|c|c||c|c||c|c||c|c|}
\hline
$\alpha_A$ & $50\ h^{-1}$ & $\alpha_R$ & $0.01\ h^{-1}$ & $\delta_{A}$ & $1\ h^{-1}$ & $\delta_{R}$ & $0.2\ h^{-1}$\\
$\alpha_{A'}$ & $500\ h^{-1}$ & $\alpha_{R'}$ & $50\ h^{-1}$ & $\gamma_{A}=\gamma_{R}$ & $1\ mol^{-1}h^{-1}$ & $\gamma_{C}$ & $2\  mol^{-1}h^{-1}$\\
$\beta_A$ & $50\ h^{-1}$ & $\beta_R$ & $5\ h^{-1}$ & $\theta_A$ & $50\ h^{-1}$ & $\theta_R$ & $100\ h^{-1}$\\
$\delta_{MA}$ & $10\ h^{-1}$ & $\delta_{MR}$ & $0.5\ h^{-1}$ &&& &\\

\hline
\end{tabular}
\end{small}   
\end{center}
\caption{reactions' rates for the circadian oscillator}
\label{tab:rates}
\end{table*}

\ignore{
\begin{figure*}[ht]
\centering
\fbox{
\includegraphics[width=0.8\textwidth]{PIC/circadianClock_GSPN}
}
\caption{GSPN encoding of the system~(\ref{eq:circclock}) of chemical equations corresponding to the circadian-clock network of Figure~\ref{fig:circClock}. Transitions are associated to 
(single-server) marking-dependent exponential distributions whose rates correspond to those in Table~\ref{tab:rates}.}
\label{fig:circClockgspn}
\end{figure*}
}
 \begin{figure*}[ht]
\centering{
\fbox{
\includegraphics[width=0.75\textwidth]{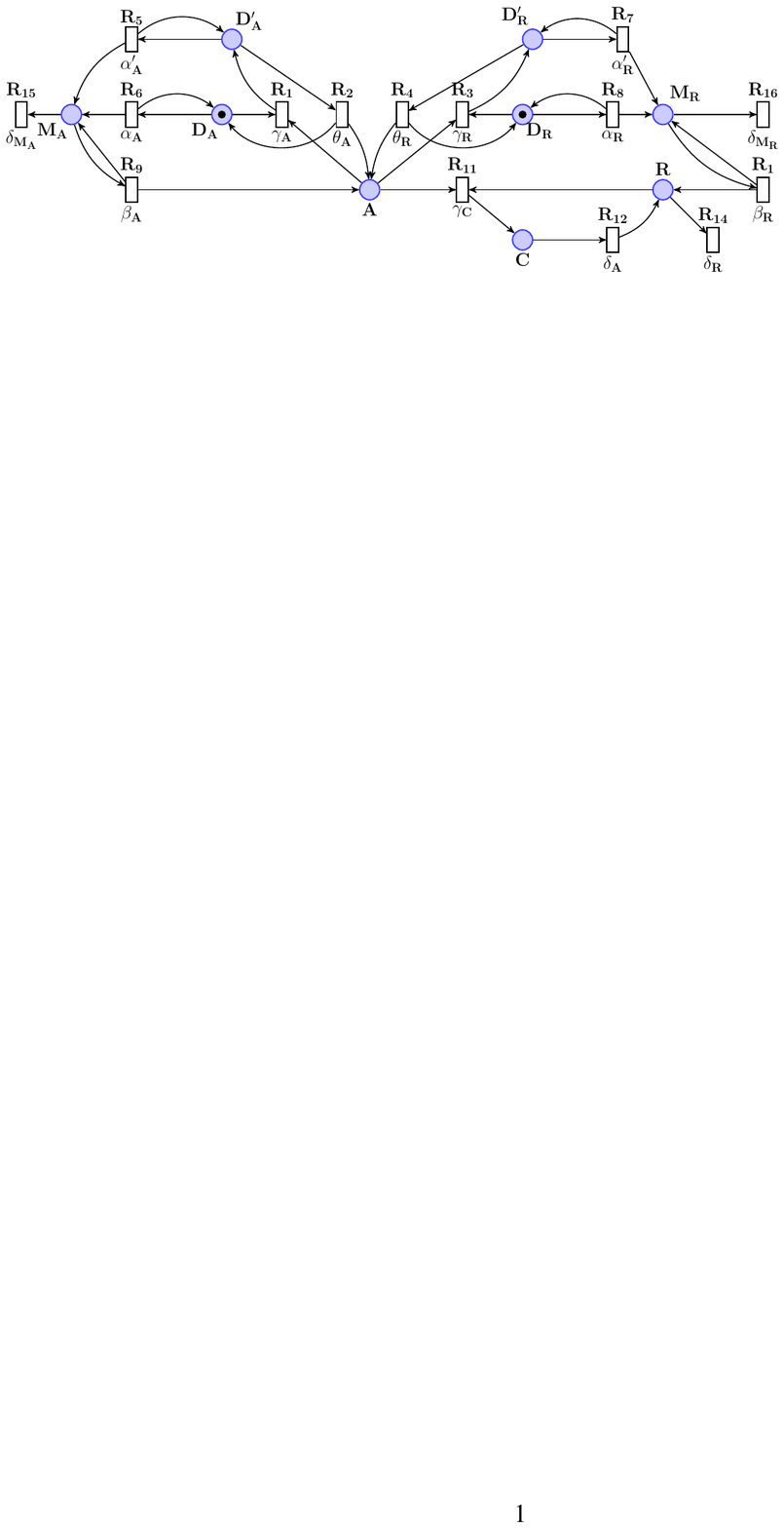}
}
 \caption{GSPN encoding of the system~(\ref{eq:circclock}) of chemical equations corresponding to the circadian-clock.}
 \label{fig:circClockgspn} }
 \end{figure*}

\begin{figure*}[ht]
\centering
\includegraphics[scale=0.22]{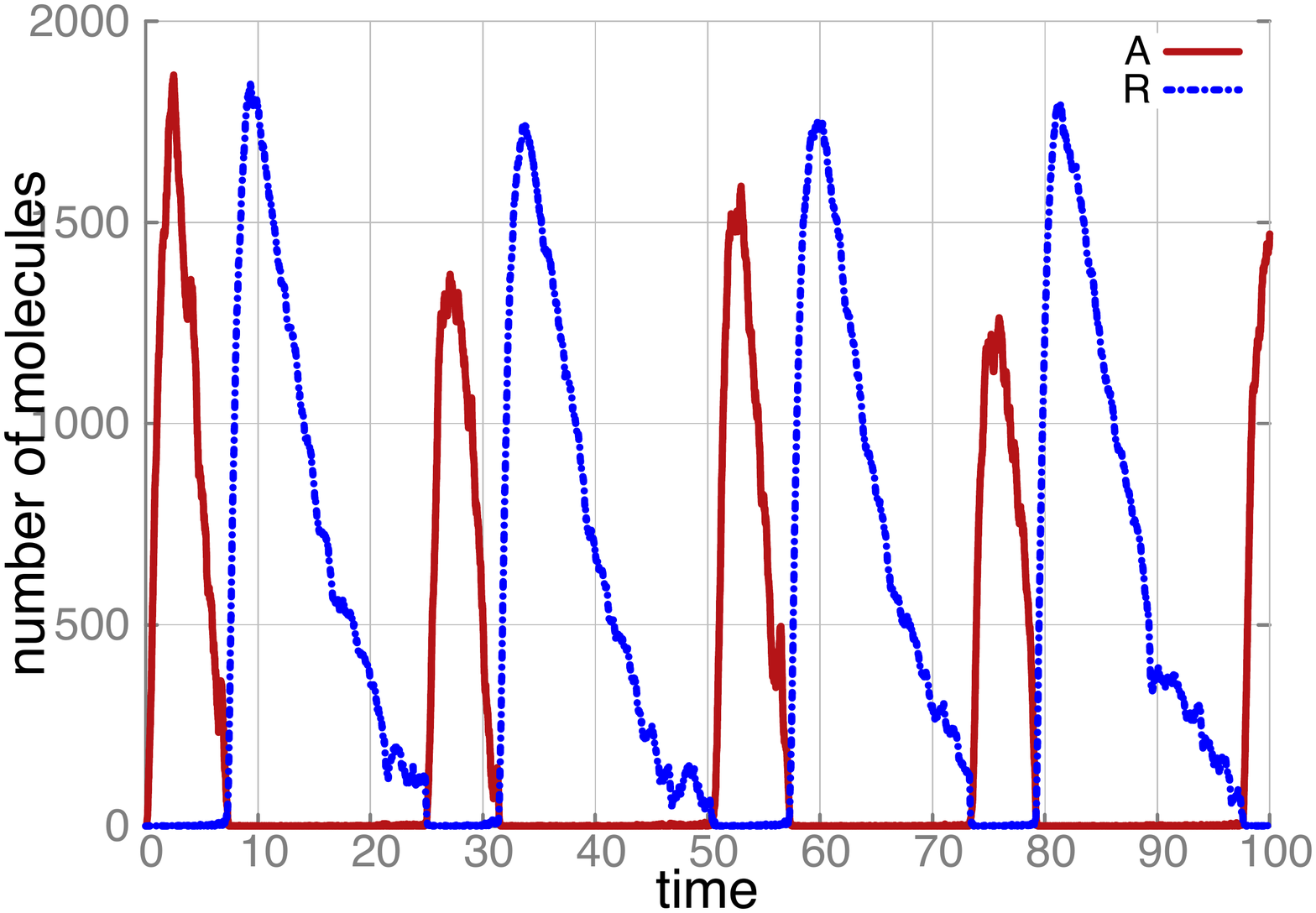}
\hskip 4ex
\includegraphics[scale=0.22]{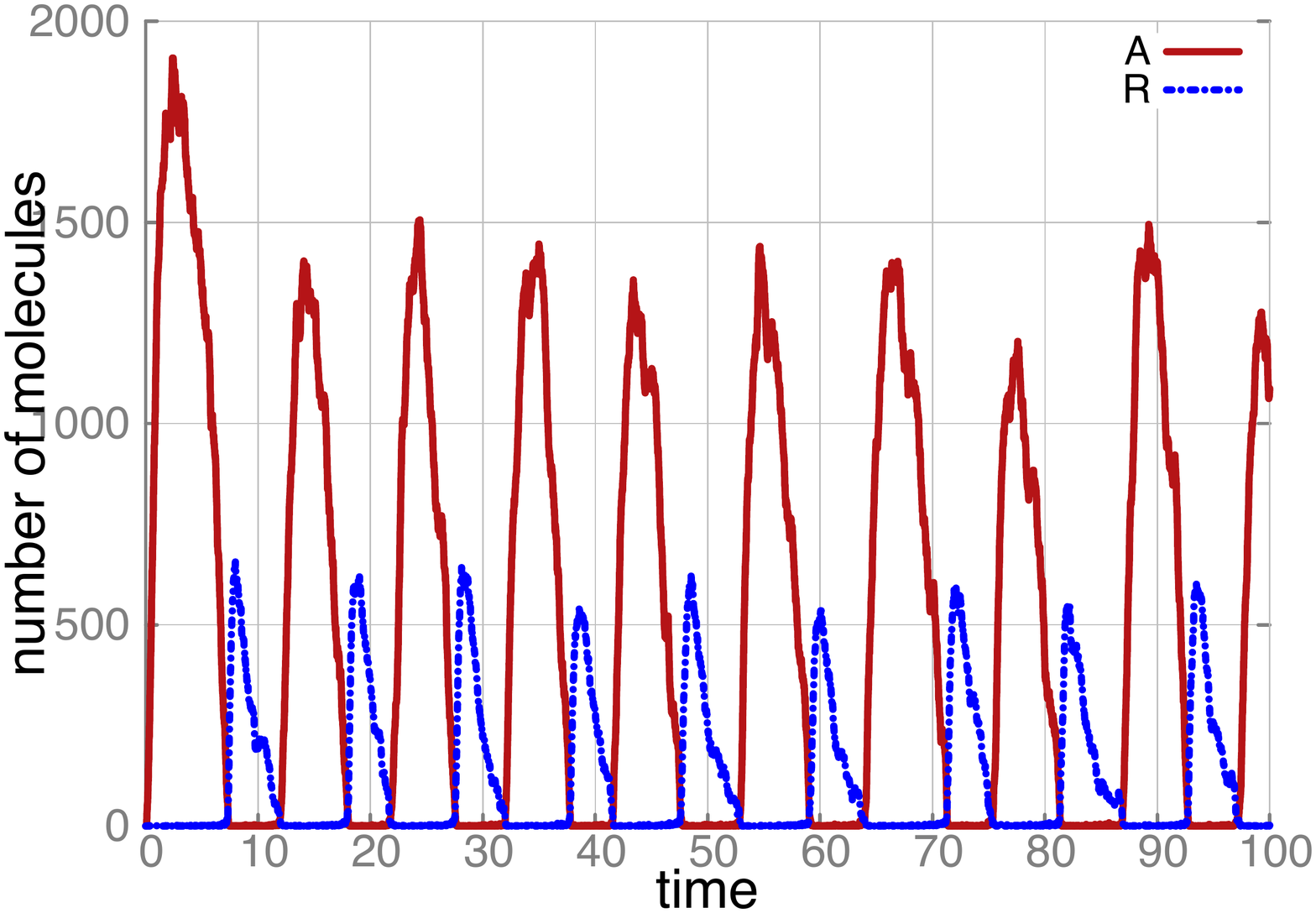}
\caption{Single trajectory showing the oscillatory character of activator $A$ and repressor $B$ dynamics with normal repressor's degradation rate $\delta_R\!=\! 0.2$ (left) and with $10\times$ speed-up, i.e. $\delta_R=2$ (right).  }
\label{fig:scaled_diss}
\end{figure*}

\noindent
The  model of Figure~\ref{fig:circClock} corresponds to the  system of chemical equations~(\ref{eq:circclock}), whose 
(continuous) kinetic rates  (taken from~\cite{VKBL02}) are given in Table~\ref{tab:rates}.

\ignore{
\begin{table}[htdp]
\begin{center}
\begin{small}
\begin{tabular}{|c|c||c|c|}
$\alpha_A$ & $50\ h^{-1}$ & $\alpha_R$ & $0.01\ h^{-1}$\\
$\alpha_{A'}$ & $500\ h^{-1}$ & $\alpha_{R'}$ & $50\ h^{-1}$\\
$\beta_A$ & $50\ h^{-1}$ & $\beta_R$ & $5\ h^{-1}$\\
$\delta_{MA}$ & $10\ h^{-1}$ & $\delta_{MR}$ & $0.5\ h^{-1}$\\
$\delta_{A}$ & $1\ h^{-1}$ & $\delta_{R}$ & $0.2\ h^{-1}$\\
$\gamma_{A}=\gamma_{R}$ & $1\ mol^{-1}h^{-1}$ & $\gamma_{C}$ & $2\  mol^{-1}h^{-1}$\\
$\theta_A$ & $50\ h^{-1}$ & $\theta_R$ & $100\ h^{-1}$\\
\end{tabular}
\end{small}   
\end{center}
\caption{The rates of the circadian clock reactions~(\ref{eq:circclock})} 
\label{tab:rates}
\end{table}
}

\begin{figure*}[ht]
\centering
\includegraphics[scale=0.25]{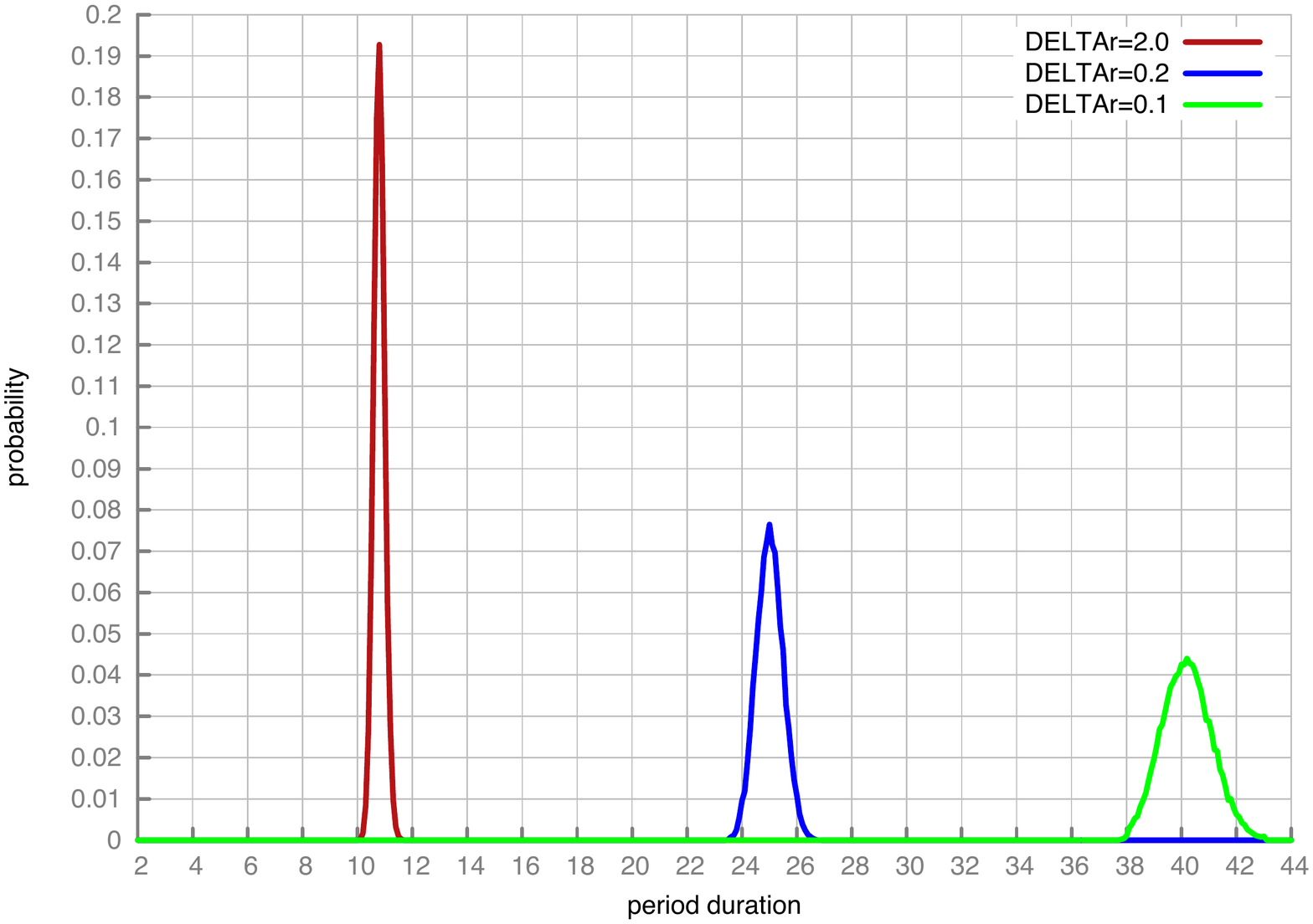}
\includegraphics[scale=0.25]{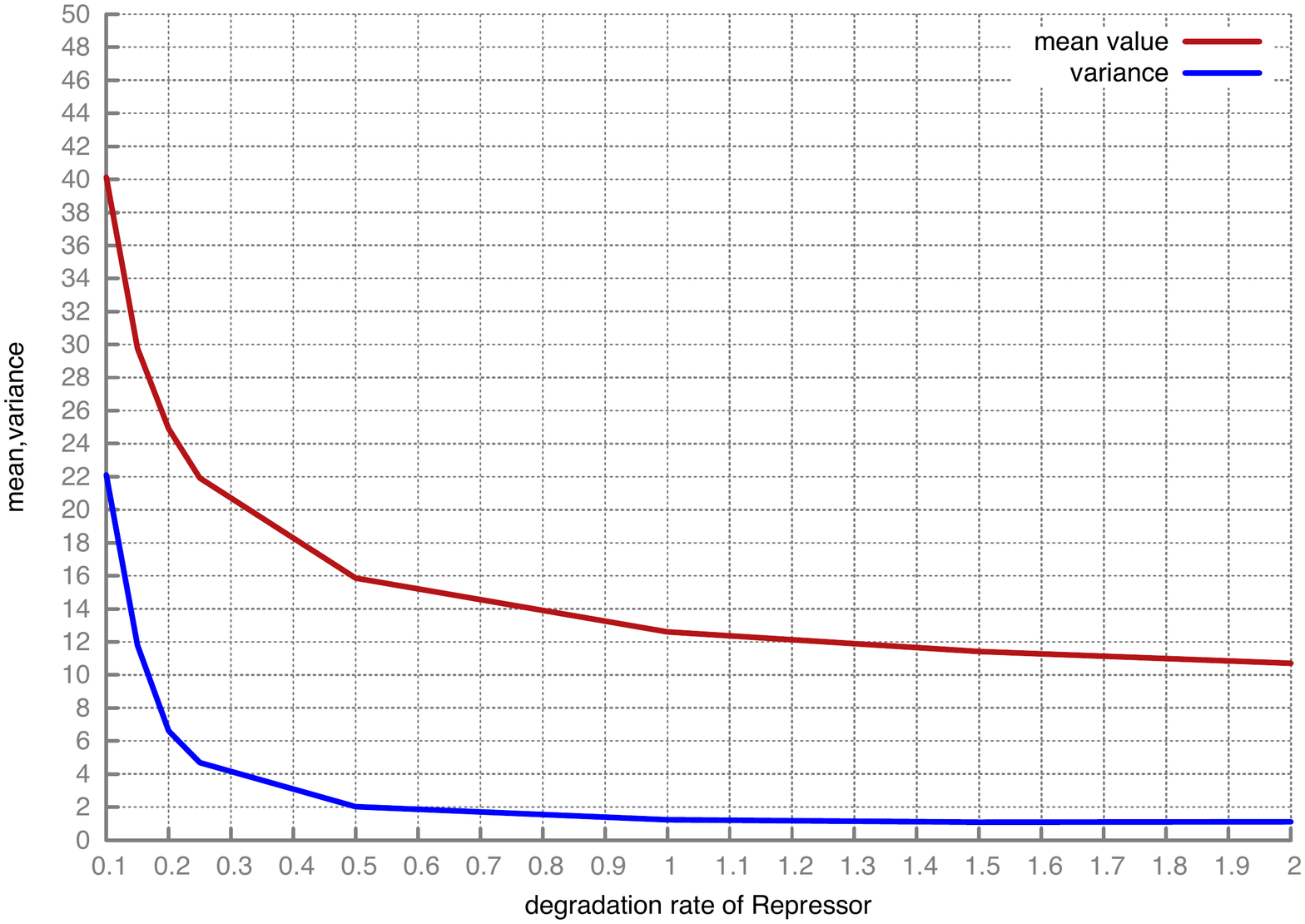}
\caption{The PDF   (left) and the mean value vs the fluctuation (right) of the period of oscillations of protein $A$  of the circadian clock measured with ${\cal A}_{period}$ 
in function of the repressor's degradation rate.}
\label{fig:pdfperiod}
\end{figure*}

\paragraph{Stochastic model} Equations~(\ref{eq:circclock}) can give rise to either  a system of ODEs 
or  to a stochastic process. 
Here we focus on the discrete-stochastic semantics: Figure~\ref{fig:circClockgspn} 
shows the GSPN encoding of equations~(\ref{eq:circclock}) developed  with \cosmos. 
The configuration of the GSPN (i.e. the stochastic process) requires setting the initial population and the rates of each transition (i.e. reaction). 
For the initial population, following~\cite{VKBL02}, we observe that the model comprises one gene $D_A$ and one $D_R$, which can either be in free-state (no activator $A$ is attached 
to the promoter) or in activator-bound state, i.e. $D'_A$, respectively $D'_R$. As a consequence the population of species $D_A$ and $D_R$ is 
bounded by the following invariant constraints: $D_A+D'_A=1$ and $D_R+D'_R=1$ (in fact places DA, DA' and DR, DR' of net  in Figure~\ref{fig:circClockgspn} are the only places covered by P-invarriants). The remaining species are initially supposed to be ``empty", hence they are initialised to $0$. 
Concerning  the transition rates, for simplicity we assume a unitary volume of the system under consideration, hence all continuous rates in Table~\ref{tab:rates} can be used straightforwardly as 
rates of the corresponding discrete-stochastic reactions. In this case we assume all reactions following a negative exponential law.

The oscillatory dynamics of the GSPN model of Figure~\ref{fig:circClockgspn} can be observed by plotting of a simulated trajectory (Figure~\ref{fig:scaled_diss}). 
Observe that the frequency of oscillations varies  considerably with the degradation rate of the repressor ($R$) protein:  a faster degradation 
of $R$ (right), intuitively, results in a higher frequency of oscillations. 
In the remainder we formally assess the oscillatory characteristics (i.e. the period and the peaks of oscillations) of the 
circadian clock model by application of  the previously described approach, i.e. by 
analysing the stochastic process deriving from synchronisation of the circadian clock GSPN model with the ${\cal A}_{period}$ and ${\cal A}_{peaks}$ automata. 

\paragraph{Measuring   the period of the circadian clock.}
We performed a number of experiments aimed at assessing the effect that the degradation rate of the repressor protein ($\delta_R$) has 
on the period of the circadian oscillator. 
Figure~\ref{fig:pdfperiod} (right) shows three plots representing the  PDF of the period (obtained through 
the HASL formula $({\cal A}_{period}, PDF(Last(t)/N))$ for three values of $\delta_R$.  With $\delta_R=0.2$ (i.e. the original value as given 
in~\cite{VKBL02}) the PDF is centred at $t=24.9$, i.e. slightly more of the standard 24 hours period expected for a circadian clock. On the other hand speeding 
up the   repressor degradation of 10 times (i.e. $\delta_R=2$) yields a  slightly more than halved  oscillation period  (i.e. PDF centred at $T=10.8$). Finally slowing down the 
degradation rate of a half (i.e. $\delta_R=0.1$) yields a less than doubled oscillation period (i.e. PDF centred at $T=40.7$). 

\begin{figure*}[ht]
\centering
\includegraphics[scale=.25]{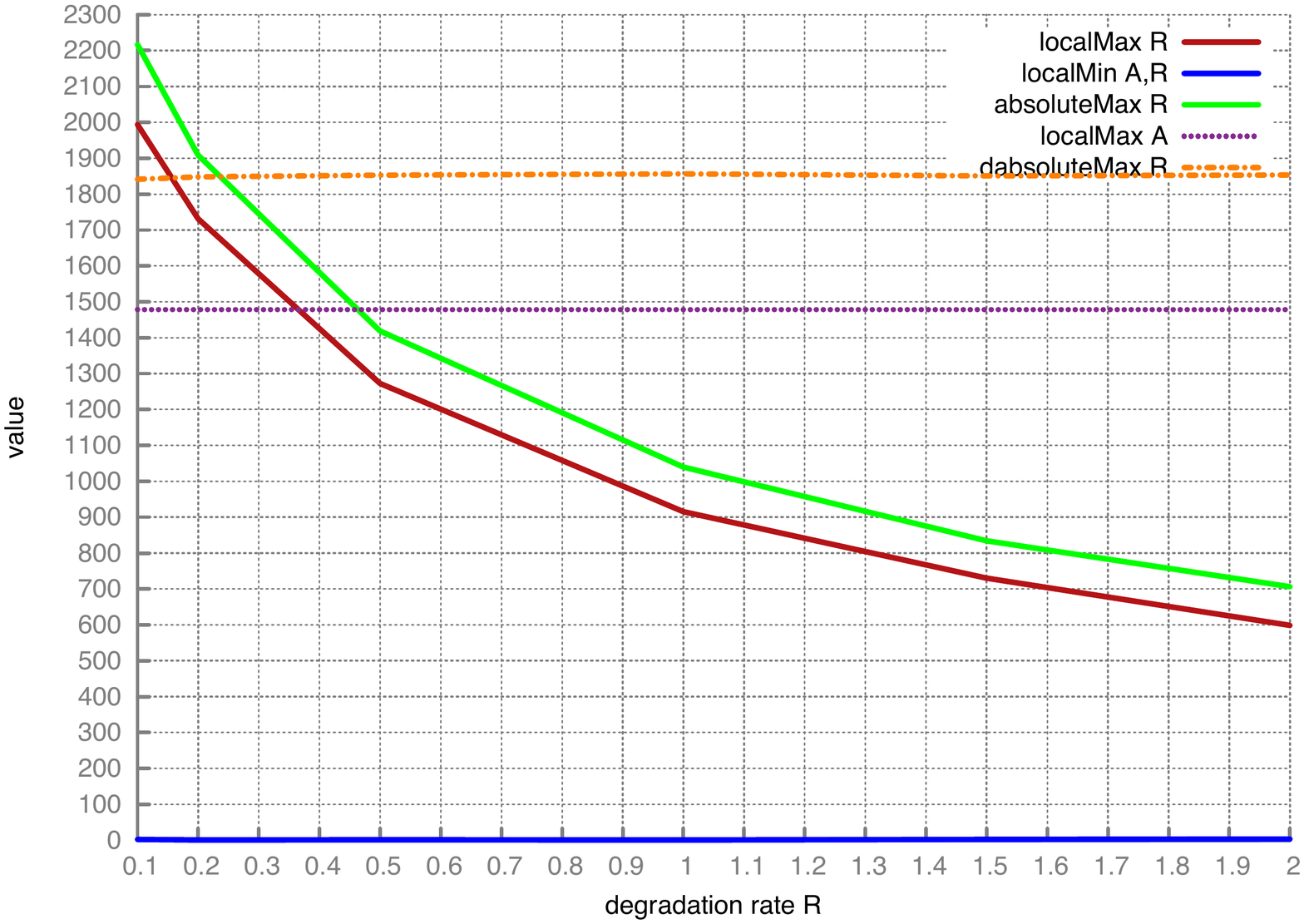}
\caption{The mean value of the minimal and maximal peaks of proteins $A$ and $R$ of the circadian clock measured with ${\cal A}_{peaks}$ 
in function of the repressor's degradation rate.}
\label{fig:meanPeaks}
\end{figure*}

Figure~\ref{fig:pdfperiod} (left) shows plots for the period mean value (red plot) and the period fluctuation (blue plot, as described in Section~\ref{sec:periodHASL}) in function of 
the degradation rate $\delta_R$. They   indicate that slowing down the  degradation of the repressor yields, on one hand,  to a lower the frequency of oscillations, and on the other,  augmenting the irregularity of the periods (i.e. augmenting the period's fluctuations). 
All plots in Figure~\ref{fig:pdfperiod} result from sampling of finite trajectories consisting of $N\!=\!100$ periods, where periods have been detected using  $L\!=\!1$ and $H\!=\!1000$ 
as partition thresholds, and target estimates have been computed with confidence level 99 and confidence-interval width of 0.01. Furthermore the PDF plots in Figure~\ref{fig:pdfperiod} (right) have been 
computed using  a discretisation of the period support interval $[0,50]$ into subintervals of width 0.1.

\paragraph{Measuring   the peaks of oscillations  of the circadian clock.}
We performed a number of experiments aimed at assessing the effect that the degradation rate of the repressor protein ($\delta_R$) has 
on the peaks of oscillation for  both protein $A$ and $R$. 
Figure~\ref{fig:meanPeaks} shows  plots for the mean value of  the minimal and maximal peaks of 
oscillations for both $A$ and $R$. 
Results indicate that while the 
degradation rate $\delta_R$ has no effect on the oscillation peaks of $A$ (both maximal and minimal peaks of $A$ are constant independently of $\delta_R$), 
it affects the maximal peaks (only) of $R$. Specifically  the mean value of $R$'s maximal peaks decreases with the increasing of $\delta_R$ 
(while the minimal peaks of $R$   are constantly at $0$), notice that this is in agreement with 
what indicated by the  single trajectories depicted in  Figure~\ref{fig:scaled_diss}. 
Notice that Figure~\ref{fig:meanPeaks} contains also plot for the absolute maximum of population of  $A$ and $R$ 
measured along the sampled trajectories through trivial  HASL expressions $Z\equiv AVG(max(x))$ (where $x$ is a variable 
used to record the population of the observed species along a synchronising path). 
All plots in Figure~\ref{fig:meanPeaks} result from sampling of finite trajectories containing of $N\!=\!100$ maximal peaks and using a noise parameter $\delta\!=\! 10\% AVG(max(x))$, 
meaning that for evaluating the mean value of maximal peaks we discarded all  critical points distanced one another less than 10\% of the absolute maximum of the observed species.  
Finally, again points of every plot in Figure~\ref{fig:meanPeaks}   have been computed with confidence level 99 and confidence-interval width of 0.01.

\paragraph{On the initial transient.}
To assess the effect that the initial transient of the circadian clock model 
 have on the period and peaks estimates we repeated all of the above discussed 
experiments with different values  of the $initT$ parameter (e.g. $initT\!\in\!\{10,50,100,500,1000,\ldots\}$) 
which determines the starting measuring point for  ${\cal A}_{period}$ 
and of ${\cal A}_{peaks}$. The outcomes of repeated experiments   turned out to be     independent of the 
chosen $initT$ value, indicating that circadian clock reaches its steady state very quickly. 
\ignore{
\paragraph{On the convergence of the period related measures}
As  pointed out above  here we propose to assess a steady-state measure (i.e. the period of a perpetual stochastic oscillator) by means 
of a methodology which  is based on  sampling of  finite horizon observations  (i.e. number of detected noisy periods).  
Simulation based techniques for evaluating steady-state measure have been considered before~\cite{Propp96exactsampling,ATVA09} but their integration 
within HASL is out of the scope of this paper. Here we limit ourselves to provide some empirical evidence in support of the HASL based 
estimation of stationary behaviours. Before  we have observed that, intuitively,  letting the 
number of observed periods $N$ tending to infinity should result in the HASL  measure converging to  long run measure. Here we provide  
some experimental evidence to sustain this argument. Figure~\ref{fig:convergenceVariance} reports about the outcome of 
HASL measured mean value and variance of the Circadian oscillator period, in function of $N$ the number of observed periods. 
Results indicate that the mean value of the measured period converges  essentially immediately to its steady-state value (i.e. it is unaffected by $N$), whereas the variance 
has a slower convergence (which is reasonable). Experiments like those in Figure~\ref{fig:convergenceVariance} give us some useful information 
for configuring the LHA for the actual measurements of interest: once we know a certain measure converges for a  certain value $N\geq N^*$ then 
we will set the our LHA with $N=N^*$. 
}

\section{Related work and discussion}
The HASL based  methodology  presented in this paper is by no means the only  
approach aimed at the analysis of discrete-state stochastic oscillators. 
In the following we provide a brief (non exhaustive) overview of similar approaches. 

\paragraph{Mathematical approaches} 

The analysis of   periodic signals can be achieved through well established signal processing techniques such as, for example,  Fast Fourier Transform (FFT) and autocorrelation. 
Both methods   estimate  the dominant frequency of a periodic signal   given  in terms of a sequence of (real-valued) 
points. 
In the context of stochastic modelling both  FFT and autocorrelation analysis is performed 
over trajectories   generated by a stochastic simulator. 
In order to increase the accuracy of the estimates   usually  frequency estimation is then replicated  over $N$  trajectories,  
the final result being given as the  average of the frequency estimate of each trajectory (see e.g.~\cite{DBLP:conf/isola/DavidLLMPS12,ihekwaba:hal-00784412}).
The main appeal   of signal processing techniques  is due to their   simplicity. However, in the context of statistical model checking, 
 adding an (automatic)  control on the accuracy 
of the resulting estimate would require their integration within a confidence interval estimation procedure, something which 
at best of our knowledge has not yet been done. 
From an expressiveness point of view it is worth remarking that  FFT and autocorrelation are limited to 
estimating the (mean value) of the frequency of an oscillator but provides no support for assessing other aspects 
of oscillator such as the location of the oscillation peaks and the \emph{regularity} (i.e. the fluctuation)  of  the period. 
Finally another interesting contribution belonging to the field of mathematical approaches is presented in~\cite{JKNP12}, where the relationship between stochastic oscillators and their continuous-deterministic counterpart 
is analysed. 




\paragraph{Model checking based approaches.} 
Analysis of oscillators through stochastic model checking techniques has been considered in several works.  
Application of CSL~\cite{BHHK03}  to the characterisation  CTMC biochemical oscillators  has been considered, 
with limited success in~\cite{Ballarini20102019},  and more comprehensively in~\cite{Spieler09,Spieler13}. 
In~\cite{Spieler09} Spieler demonstrated that deciding whether a given CTMC  model oscillates sustainably  
boils down to a steady-state analysis problem where the allegedly oscillating CTMC is coupled with 
a \emph{period detector} automata (through  manual hard-wiring). In this case the 
probability that the period of oscillation has a certain value is  computed through  dedicated CSL steady-state formulae and has been demonstrated 
through examples on the \prism\ model-checker.

In a recent work  measuring of oscillations has been considered with other statistical model checking tools (UPPAAL-SMC and PLASMA) by application 
of the MITL logic~\cite{DBLP:conf/isola/DavidLLMPS12}. In this case the analysis of period duration is achieved by detection of a single period of oscillation 
through nested time-bounded \emph{Until}  formulae. 

\section{Conclusion}
\label{sec:conc}
We have presented a methodology for the formal analysis of  stochastic models 
exhibiting an oscillatory behaviour. 
Such methodology relies on the application of the HASL formalism, a  statistical model checking 
framework suitable for expressing   sophisticated performance measures. 
We have shown how by means of HASL one can  define   specific LHA automata 
 targeted to the analysis of particular  aspects of the dynamics of oscillatory trajectories, such as: 
 the detection of the  period and of the peaks of a stochastic oscillator.  
For the period we have introduced a class of LHA, denoted ${\cal A}_{per}$,  
for  estimating the PDF, the mean value   as well as the 
 the fluctuation of the period duration (the latter being an interesting measure 
related to the regularity of an oscillator frequency). Concerning the peaks 
we have introduced a class of LHA, ${\cal A}_{peaks}$,  for measuring the mean value of the maximal/minimal peaks of oscillation. 
We have demonstrated the effectiveness of the methodology  by studying a well established model of 
biological oscillator, namely the circadian clock.

\bibliographystyle{abbrv}
\bibliography{biblio}

\end{document}